\definecolor{bleuclair}{rgb}{0.75,0.75,1.0}
\newcommand {\tv}[1] {\lVert #1 \rVert_{\rm TV}}
\def \P {\mathbb{P}}
\def \ov {\overline}
\def \ts {\tau_{\rm stop}}
\newenvironment{xpl}[1][Running Example]{\textbf{#1.} }{\ \rule{0.5em}{0.5em}}
\newtheorem{prpstn}{Proposition}
\newtheorem{thrm}{Theorem}
\newtheorem{crllr}{Corollary}
\newtheorem{lmm}{Lemma}
\newcommand{\Nats}[0]{\ensuremath{\mathbb{N}}}
\newcommand{\R}[0]{\ensuremath{\mathbb{R}}}
\newcommand{\Bool}[0]{\ensuremath{\mathds{B}}}
\newcommand{\CG}[1]{\begin{color}{blue}\textit{}\end{color}}
\newcommand{\fg}[1]{''}
\begin{document}
%
\title{Random Walk in a N-cube Without Hamiltonian Cycle 
  to Chaotic Pseudorandom Number Generation: Theoretical and Practical 
  Considerations}

%
%
%
\author{Sylvain Contassot-Vivier, Jean-François Couchot, Christophe Guyeux, and Pierre-Cyrille Heam}

\maketitle

\begin{abstract}
  Designing a pseudorandom number generator (PRNG) is a
difficult  and complex  task.  Many  recent works  have considered  chaotic
functions as the basis of built PRNGs: the quality of the output would
indeed
be an obvious consequence of some chaos properties.  However, there is
no  direct  reasoning that  goes  from  chaotic functions  to  uniform
distribution of the output.  
Moreover,  
embedding such kind of functions into a PRNG does not necessarily
allow to get a  chaotic output,
which could be required for simulating some chaotic behaviors.

In a  previous work,  some of  the authors have  proposed the  idea of
walking into  a $\mathsf{N}$-cube  where a balanced  Hamiltonian cycle
has been removed as the basis of  a chaotic PRNG. In this article, all
the  difficult  issues  observed  in   the  previous  work  have  been
tackled.  The  chaotic behavior  of  the  whole  PRNG is  proven.  The
construction of  the balanced  Hamiltonian cycle is  theoretically and
practically solved. An upper bound of  the expected length of the walk
to  obtain a  uniform distribution  is calculated.   Finally practical
experiments show  that the generators successfully  pass the classical
statistical tests.
\end{abstract}

\section{Introduction}
The exploitation of chaotic systems to generate pseudorandom sequences
is  a very topical issue~\cite{915396,915385,5376454}.  Such   systems  are
fundamentally chosen  because of  their unpredictable character  and their
sensitiveness to initial conditions.   In most cases, these generators
simply  consist in  iterating  a chaotic  function  like the  logistic
map~\cite{915396,915385} or  the Arnold's  one~\cite{5376454}\ldots 
Optimal  parameters of  such functions remain to be found so that
attractors are avoided,\textit{e.g.}.
By following this procedure, generated numbers will hopefully 
follow a uniform  distribution.  In order to check the  quality of the
produced outputs, PRNGs (Pseudo-Random Number
Generators)    are usually  tested with   statistical    batteries   like   the   so-called
DieHARD~\cite{Marsaglia1996},          NIST~\cite{Nist10},          or
TestU01~\cite{LEcuyerS07} ones.

In its  general understanding,  the notion of chaos  is  often reduced  to the
strong  sensitiveness  to  the  initial  conditions  (the  well  known
``butterfly effect''): a continuous function $k$ defined on a metrical
space is said to be \emph{strongly sensitive to the  initial conditions} if
for each point $x$ and each  positive value $\epsilon$, it is possible
to find another point $y$ as close  as possible to $x$, and an integer
$t$ such that the distance between the $t$-th iterates of $x$ and $y$,
denoted by $k^t(x)$ and $k^t(y)$, is larger than $\epsilon$. However,
in  his definition  of  chaos, Devaney~\cite{Devaney}  imposes to  the
chaotic function  two other properties called  \emph{transitivity} and
\emph{regularity}. The functions mentioned above  have been studied according
to these  properties, and they  have been  proven as chaotic  on $\R$.
But  nothing  guarantees  that  such  properties  are  preserved  when
iterating the functions on floating point numbers, which is the domain
of interpretation of real numbers $\R$ on machines.

To avoid this  lack of chaos, we have previously  presented some PRNGs
that iterate  continuous functions $G_f$  on a discrete domain  $\{ 1,
\ldots,  n  \}^{\Nats}  \times  \{0,1\}^n$, where  $f$  is  a  Boolean
function  (\textit{i.e.},  $f  :  \{0,1\}^{\mathsf{N}}
\rightarrow  \{0,1\}^{\mathsf{N}}$).
These         generators          are         $\textit{CIPRNG}_f^1(u)$
\cite{guyeuxTaiwan10,bcgr11:ip},            $\textit{CIPRNG}_f^2(u,v)$
\cite{wbg10ip},             and             $\chi_{\textit{14Secrypt}}$
\cite{DBLP:conf/secrypt/CouchotHGWB14}    where   \textit{CI} stands for
\emph{Chaotic Iterations}.  We have firstly proven in~\cite{bcgr11:ip}
that,    to    establish    the   chaotic    nature    of    
$\textit{CIPRNG}_f^1$ algorithm,  it  is  necessary   and  sufficient  that  the
asynchronous iterations  are strongly  connected. We then  have proven
that it is necessary and  sufficient that the Markov matrix associated
to  this graph  is  doubly  stochastic, in  order  to  have a  uniform
distribution of  the outputs.  We have finally  established sufficient
conditions to guarantee the first  property of connectivity. Among the
generated   functions,   we   thus   have   considered   for   further
investigations  only  the  ones   that  satisfy  the  second  property
as well.

However,      it     cannot      be     directly      deduced     that
$\chi_{\textit{14Secrypt}}$ is chaotic since we  do not output all the
successive values of iterating $G_f$.   This algorithm only displays a
subsequence $x^{b.n}$  of a whole  chaotic sequence $x^{n}$ and  it is
indeed incorrect to say that the chaos property is preserved for any
subsequence of  a chaotic sequence.  This  article presents conditions
to preserve this property.

Finding a Boolean function which provides a  strongly connected
iteration graph having a doubly stochastic Markov matrix is however
not an easy task.
We have firstly proposed in~\cite{bcgr11:ip} a  generate-and-test based
approach that solved this issue. However, this one was not efficient enough.
Thus, a second scheme has been further presented
in~\cite{DBLP:conf/secrypt/CouchotHGWB14} by remarking that
a  $\mathsf{N}$-cube where an Hamiltonian cycle (or equivalently a Gray code)
has been removed is strongly connected and has
a doubly stochastic Markov matrix.

However, the removed Hamiltonian cycle  
has a great influence in the quality of the output.
For instance, if this one is not balanced (\textit{i.e.},
the number of changes in different bits are completely different),
some bits would be hard to switch.
This article shows an effective algorithm that efficiently 
implements the previous scheme and thus provides  functions issued
from removing, in the $\mathsf{N}$-cube, a \emph{balanced} Hamiltonian cycle. 

The length $b$ of the walk to reach a
distribution close to the uniform one would be dramatically long.
This article theoretically and practically
studies the length $b$ until the corresponding Markov
chain is close to the uniform distribution.
Finally, the ability of the approach to face classical tests
suite is evaluated.

This article, which 
is an extension of~\cite{DBLP:conf/secrypt/CouchotHGWB14}, 
is organized  as  follows. The  next
section   is   devoted   to  preliminaries,   basic   notations,   and
terminologies   regarding   Boolean    map   iterations.    Then,   in
Section~\ref{sec:proofOfChaos},  Devaney's  definition   of  chaos  is
recalled  while the  proof  of chaos  of our  most  general PRNGs  is
provided.      This    is     the     first    major     contribution.
Section~\ref{sec:SCCfunc}  recalls a general scheme
to obtain functions with an expected behavior. Main theorems are recalled
to make the article self-sufficient. 
The  next  section (Sect.~\ref{sec:hamilton}) presents an algorithm that
implements this scheme and proves that it always produces a solution.  
This  is   the   second   major  contribution.
Then, Section~\ref{sec:hypercube} defines the theoretical framework to study
the  mixing-time,  \textit{i.e.},  
the sufficient amont of time until  reaching  an uniform
distribution. It proves that this one is in the worst case quadratic in the number
of elements. Experiments show that the bound is in practice 
significantly lower. This  is   the   third   major  contribution.  
Section~\ref{sec:prng} gives practical results  on evaluating the PRNG
against  the NIST  suite.  This  research  work ends  with a  conclusion
section, where the contribution is summarized and intended future work
is outlined.


\section{Preliminaries}\label{sec:preliminaries}
In what follows, we consider the Boolean algebra on the set 
$\Bool=\{0,1\}$ with the classical operators of conjunction '.', 
of disjunction '+', of negation '$\overline{~}$', and of 
disjunctive union $\oplus$. 

Let us first introduce basic notations.
Let $\mathsf{N}$ be a positive integer. The set $\{1, 2, \hdots , \mathsf{N}\}$
of integers belonging between $1$ and $\mathsf{N}$
is further denoted as $\llbracket 1, \mathsf{N} \rrbracket$.
A  {\emph{Boolean map} $f$ is 
a function from $\Bool^{\mathsf{N}}$  
to itself 
such that 
$x=(x_1,\dots,x_{\mathsf{N}})$ maps to $f(x)=(f_1(x),\dots,f_{\mathsf{N}}(x))$.
In what follows, for any finite set $X$, $|X|$ denotes its cardinality and 
$\lfloor y \rfloor$ is
the largest integer lower than $y$.

Functions are iterated as follows. 
At the $t^{th}$ iteration, only the $s_{t}-$th component is said to be
``iterated'', where $s = \left(s_t\right)_{t \in \mathds{N}}$ is a sequence of indices taken in $\llbracket 1;{\mathsf{N}} \rrbracket$ called ``strategy''. 
Formally,
let $F_f:  \Bool^{{\mathsf{N}}} \times \llbracket1;{\mathsf{N}}\rrbracket$ to $\Bool^{\mathsf{N}}$ be defined by
\[
F_f(x,i)=(x_1,\dots,x_{i-1},f_i(x),x_{i+1},\dots,x_{\mathsf{N}}).
\]
Then, let $x^0\in\Bool^{\mathsf{N}}$ be an initial configuration
and $s\in
\llbracket1;{\mathsf{N}}\rrbracket^\Nats$ be a strategy, 
the dynamics are described by the recurrence
\begin{equation}\label{eq:asyn}
x^{t+1}=F_f(x^t,s_t).
\end{equation}

Let be given a Boolean map $f$. Its associated   
{\emph{iteration graph}}  $\Gamma(f)$ is the
directed graph such that  the set of vertices is
$\Bool^{\mathsf{N}}$, and for all $x\in\Bool^{\mathsf{N}}$ and $i\in \llbracket1;{\mathsf{N}}\rrbracket$,
the graph $\Gamma(f)$ contains an arc from $x$ to $F_f(x,i)$.
Each arc $(x,F_f(x,i))$ is labelled with $i$.

\begin{xpl}
Let us consider for instance ${\mathsf{N}}=3$.
Let 
$f^*: \Bool^3 \rightarrow \Bool^3$ be defined by
$f^*(x_1,x_2,x_3) = 
(x_2 \oplus x_3, \overline{x_1}\overline{x_3} + x_1\overline{x_2},
\overline{x_1}\overline{x_3} + x_1x_2)$.
The iteration graph $\Gamma(f^*)$ of this function is given in 
Figure~\ref{fig:iteration:f*}.
\end{xpl}

\begin{figure}[ht]
\begin{center}
\includegraphics[width=0.45\textwidth]{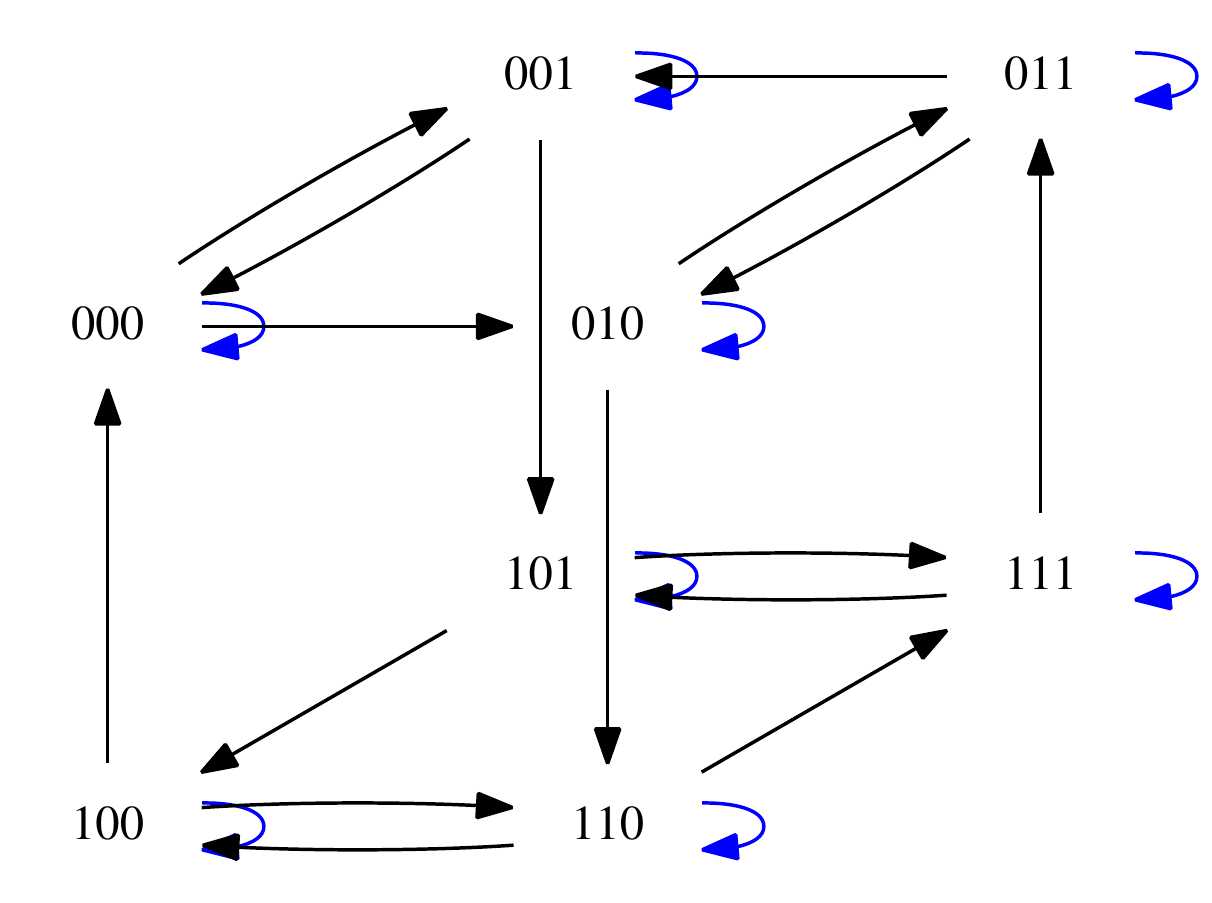}
\end{center}
\caption{Iteration Graph $\Gamma(f^*)$ of the function $f^*$}\label{fig:iteration:f*}
\end{figure}

Let us finally recall the pseudorandom number generator $\chi_{\textit{14Secrypt}}$
\cite{DBLP:conf/secrypt/CouchotHGWB14}
formalized in Algorithm~\ref{CI Algorithm}.
It is based on random walks in $\Gamma(f)$. 
More precisely, let be given a Boolean map $f:\Bool^{\mathsf{N}} \rightarrow \Bool^{\mathsf{N}}$,
an input PRNG \textit{Random},
an integer $b$ that corresponds to a number of iterations, and 
an initial configuration $x^0$. 
Starting from $x^0$, the algorithm repeats $b$ times 
a random choice of which edge to follow and traverses this edge.
The final configuration is thus outputted.

\begin{algorithm}[ht]
\begin{scriptsize}
\KwIn{a function $f$, an iteration number $b$, an initial configuration $x^0$ (${\mathsf{N}}$ bits)}
\KwOut{a configuration $x$ (${\mathsf{N}}$ bits)}
$x\leftarrow x^0$\;
\For{$i=0,\dots,b-1$}
{
$s\leftarrow{\textit{Random}({\mathsf{N}})}$\;
$x\leftarrow{F_f(x,s)}$\;
}
return $x$\;
\end{scriptsize}
\caption{Pseudo Code of the $\chi_{\textit{14Secrypt}}$ PRNG}
\label{CI Algorithm}
\end{algorithm}

Based on this setup,
we can study the chaos properties of these 
functions.
This is the aim of the next section.

\section{Proof of Chaos}\label{sec:proofOfChaos}

\subsection{Motivations}
Let us us first recall the chaos theoretical context presented 
in~\cite{bcgr11:ip}. In this article, the space of interest 
is $\Bool^{{\mathsf{N}}} \times \llbracket1;{\mathsf{N}}\rrbracket^{\Nats}$ 
and the iteration function $\mathcal{H}_f$ is  
the map from 
$\Bool^{{\mathsf{N}}} \times \llbracket1;{\mathsf{N}}\rrbracket^{\Nats}$ 
to itself defined by
\[
\mathcal{H}_f(x,s)=(F_f(x,s_0),\sigma(s)).
\] 
In this definition, 
$\sigma: \llbracket1;{\mathsf{N}}\rrbracket^{\Nats} \longrightarrow
 \llbracket1;{\mathsf{N}}\rrbracket^{\Nats} 
$
 is a shift operation on sequences (\textit{i.e.}, a function that removes the 
first element of the sequence) formally defined with
$$
\sigma((u^k)_{k \in \Nats}) =  (u^{k+1})_{k \in \Nats}. 
$$

We have proven~\cite[Theorem 1]{bcgr11:ip} that   
$\mathcal{H}_f$ is chaotic in 
$\Bool^{{\mathsf{N}}} \times \llbracket1;{\mathsf{N}}\rrbracket^{\Nats}$
if and only if $\Gamma(f)$ is strongly connected.
However, the corollary which would say that $\chi_{\textit{14Secrypt}}$ is chaotic 
cannot be directly deduced since we do not output all the successive
values of iterating $F_f$. Only a few of them are concerned and 
any subsequence of a chaotic sequence  is   not  necessarily  
a   chaotic  sequence  as well.
This necessitates a rigorous proof, which is the aim of this section.
Let us firstly recall the theoretical framework in which this research takes place.

\subsection{Devaney's Chaotic Dynamical Systems}
\label{subsec:Devaney}

Consider a topological space $(\mathcal{X},\tau)$ and a continuous function $f :
\mathcal{X} \rightarrow \mathcal{X}$~\cite{Devaney}.

\begin{definition}
The function $f$ is said to be \emph{topologically transitive} if, for any pair of open sets
$U,V \subset \mathcal{X}$, there exists $k>0$ such that $f^k(U) \cap V \neq
\varnothing$.
\end{definition}

\begin{definition}
An element $x$ is a \emph{periodic point} for $f$ of period $n\in \mathds{N}^*$
if $f^{n}(x)=x$.
\end{definition}

\begin{definition}
$f$ is said to be \emph{regular} on $(\mathcal{X}, \tau)$ if the set of periodic
points for $f$ is dense in $\mathcal{X}$: for any point $x$ in $\mathcal{X}$,
any neighborhood of $x$ contains at least one periodic point (without
necessarily the same period).
\end{definition}

\begin{definition}[Devaney's formulation of chaos~\cite{Devaney}]
The function $f$ is said to be \emph{chaotic} on $(\mathcal{X},\tau)$ if $f$ is regular and
topologically transitive.
\end{definition}

The chaos property is strongly linked to the notion of ``sensitivity'', defined
on a metric space $(\mathcal{X},d)$ by:

\begin{definition}
\label{sensitivity} The function $f$ has \emph{sensitive dependence on initial conditions}
if there exists $\delta >0$ such that, for any $x\in \mathcal{X}$ and any
neighborhood $V$ of $x$, there exist $y\in V$ and $n > 0$ such that
$d\left(f^{n}(x), f^{n}(y)\right) >\delta $.

The constant $\delta$ is called the \emph{constant of sensitivity} of $f$.
\end{definition}

Indeed, Banks \emph{et al.} have proven in~\cite{Banks92} that when $f$ is
chaotic and $(\mathcal{X}, d)$ is a metric space, then $f$ has the property of
sensitive dependence on initial conditions (this property was formerly an
element of the definition of chaos).

\subsection{A Metric Space for PRNG Iterations}

Let us first introduce $\mathcal{P} \subset \mathds{N}$ a finite nonempty
set having the cardinality $\mathsf{p} \in \mathds{N}^\ast$.
Intuitively, this  is the set of authorized numbers of iterations.
Denote by $p_1, p_2, \hdots, p_\mathsf{p}$ the ordered elements of $\mathcal{P}$: $\mathcal{P} = \{ p_1, p_2, \hdots, p_\mathsf{p}\}$
and $p_1< p_2< \hdots < p_\mathsf{p}$. 

In our Algorithm~\ref{CI Algorithm}, 
$\mathsf{p}$ is 1 and $p_1$ is $b$. 
But this algorithm can be seen as $b$ functional compositions of $F_f$.
Obviously, it can be generalized with $p_i$, $p_i \in \mathcal{P}$,
functional compositions of $F_f$.
Thus, for any $p_i \in \mathcal{P}$ we introduce the function 
$F_{f,p_i} :  \mathds{B}^\mathsf{N} \times \llbracket 1, \mathsf{N} \rrbracket^{p_i}  \rightarrow \mathds{B}^\mathsf{N}$ defined by 
\[
\begin{array}{l}
F_{f,p_i} (x,(u^0, u^1, \hdots, u^{p_i-1}))  \mapsto \\
\qquad F_f(\hdots (F_f(F_f(x,u^0), u^1), \hdots), u^{p_i-1}).
\end{array}
\]

The considered space is 
 $\mathcal{X}_{\mathsf{N},\mathcal{P}}=  \mathds{B}^\mathsf{N} \times \mathds{S}_{\mathsf{N},\mathcal{P}}$, where 
$\mathds{S}_{\mathsf{N},\mathcal{P}}=
\llbracket 1, \mathsf{N} \rrbracket^{\Nats}\times 
\mathcal{P}^{\Nats}$. 
Each element in this space is a pair where the first element is 
$\mathsf{N}$-uple in $\Bool^{\mathsf{N}}$, as in the previous space.  
The second element is a pair $((u^k)_{k \in \Nats},(v^k)_{k \in \Nats})$ of infinite sequences.
The sequence $(v^k)_{k \in \Nats}$ defines how many iterations are executed at time $k$ before the next output, 
while $(u^k)_{k \in \Nats}$ details which elements are modified. 

Let us introduce the shift function $\Sigma$ for any element of $\mathds{S}_{\mathsf{N},\mathcal{P}}$.

\[
\begin{array}{cccc}
\Sigma:&\mathds{S}_{\mathsf{N},\mathcal{P}} &\rightarrow
&\mathds{S}_{\mathsf{N},\mathcal{P}} \\
& \left((u^k)_{k \in \mathds{N}},(v^k)_{k \in \mathds{N}}\right) & \mapsto & 
\begin {array}{l}
    \left(\sigma^{v^0}\left((u^k)_{k \in \mathds{N}}\right), \right. \\
     \qquad \left. \sigma\left((v^k)_{k \in \mathds{N}}\right)\right).
 \end{array} 
\end{array}
\]

In other words, $\Sigma$ receives two sequences $u$ and $v$, and
it operates $v^0$ shifts on the first sequence and a single shift
on the second one. 
Let us consider
\begin{equation}
\begin{array}{cccc}
G_f :&  \mathcal{X}_{\mathsf{N},\mathcal{P}} & \rightarrow & \mathcal{X}_{\mathsf{N},\mathcal{P}}\\
   & (e,(u,v)) & \mapsto & \left( F_{f,v^0}\left( e, (u^0, \hdots, u^{v^0-1}\right), \Sigma (u,v) \right) .
\end{array}
\end{equation}
Then the outputs $(y^0, y^1, \hdots )$ produced by the $\textit{CIPRNG}_f^2(u,v)$ generator~\cite{wbg10:ip} 
are by definition the first components of the iterations $X^0 = (x^0, (u,v))$ and $\forall n \in \mathds{N}, 
X^{n+1} = G_f(X^n)$ on $\mathcal{X}_{\mathsf{N},\mathcal{P}}$.
The new obtained generator can be shown as either a post-treatment over generators $u$ and $v$, or a
discrete dynamical system on a set constituted by binary vectors and couple of integer sequences.

\subsection{A metric on $\mathcal{X}_{\mathsf{N},\mathcal{P}}$}

We define a distance $d$ on $\mathcal{X}_{\mathsf{N},\mathcal{P}}$ as follows. 
Consider 
$x=(e,s)$ and $\check{x}=(\check{e},\check{s})$ in 
$\mathcal{X}_{\mathsf{N},\mathcal{P}} = \mathds{B}^\mathsf{N} \times \mathds{S}_{\mathsf{N},\mathcal{P}} $,
where $s=(u,v)$ and $\check{s}=(\check{u},\check{v})$ are in $ \mathds{S}_{\mathsf{N},\mathcal{P}} = 
\mathcal{S}_{\llbracket 1, \mathsf{N} \rrbracket} \times \mathcal{S}_\mathcal{P}$. 
\begin{itemize}
\item $e$ and $\check{e}$ are integers belonging in $\llbracket 0, 2^{\mathsf{N}-1} \rrbracket$. The Hamming distance
on their binary decomposition, that is, the number of dissimilar binary digits, constitutes the integral
part of $d(X,\check{X})$.
\item The fractional part is constituted by the differences between $v^0$ and $\check{v}^0$, followed by the differences
between finite sequences $u^0, u^1, \hdots, u^{v^0-1}$ and  $\check{u}^0, \check{u}^1, \hdots, \check{u}^{\check{v}^0-1}$, followed by 
 differences between $v^1$ and $\check{v}^1$, followed by the differences
between $u^{v^0}, u^{v^0+1}, \hdots, u^{v^1-1}$ and  $\check{u}^{\check{v}^0}, \check{u}^{\check{v}^0+1}, \hdots, \check{u}^{\check{v}^1-1}$, etc.
More precisely, let $p = \lfloor \log_{10}{(\max{\mathcal{P}})}\rfloor +1$ and $n = \lfloor \log_{10}{(\mathsf{N})}\rfloor +1$.
\begin{itemize}
\item The $p$ first digits of $d(x,\check{x})$ are $|v^0-\check{v}^0|$ written in decimal numeration (and with $p$ digits: zeros are added on the left if needed).
\item The next $n\times \max{(\mathcal{P})}$ digits aim at measuring how much $u^0, u^1, \hdots, u^{v^0-1}$ differ from $\check{u}^0, \check{u}^1, \hdots, \check{u}^{\check{v}^0-1}$. The $n$ first
digits are $|u^0-\check{u}^0|$. They are followed by 
$|u^1-\check{u}^1|$ written with $n$ digits, etc.
\begin{itemize}
\item If
$v^0=\check{v}^0$, then the process is continued until $|u^{v^0-1}-\check{u}^{\check{v}^0-1}|$ and the fractional
part of $d(X,\check{X})$ is completed by 0's until reaching
$p+n\times \max{(\mathcal{P})}$ digits.
\item If $v^0<\check{v}^0$, then the $ \max{(\mathcal{P})}$  blocs of $n$
digits are $|u^0-\check{u}^0|$, ..., $|u^{v^0-1}-\check{u}^{v^0-1}|$,
$\check{u}^{v^0}$ (on $n$ digits), ..., $\check{u}^{\check{v}^0-1}$ (on $n$ digits), followed by 0's if required.
\item The case $v^0>\check{v}^0$ is dealt similarly.
\end{itemize}
\item The next $p$ digits are $|v^1-\check{v}^1|$, etc.
\end{itemize}
\end{itemize}

This distance has been defined to capture all aspects of divergences between two sequences generated 
by the $\textit{CIPRNG}_f^2$ method, when setting respectively $(u,v)$ and $(\check{u},\check{v})$ as inputted 
couples of generators. The integral part measures the bitwise Hamming distance between 
the two $\mathsf{N}$-length binary vectors chosen as seeds. The fractional part must decrease 
when the number of identical iterations applied by the $\textit{CIPRNG}_f^2$ discrete dynamical system on these seeds, in both cases (that is, when inputting either $(u,v)$ or $(\check{u},\check{v})$), increases.
More precisely, the fractional part will alternately measure the following elements:
\begin{itemize}
  \item Do we iterate the same number of times between the next two outputs, when considering either $(u,v)$ or $(\check{u},\check{v})$?
  \item Then, do we iterate the same components between the next two outputs of $\textit{CIPRNG}_f^2$ ?
  \item etc.
\end{itemize}
Finally, zeros are put to be able to recover what occurred at a given iteration. Such aims are illustrated in the two following examples.
\begin{xpl}
Consider for instance that $\mathsf{N}=13$, $\mathcal{P}=\{1,2,11\}$ (so $\mathsf{p}=3$, $p=\lfloor \log_{10}{(\max{\mathcal{P}})}\rfloor +1 = 2$, while $n=2$), and that
$s=\left\{
\begin{array}{l}
u=\underline{6,} ~ \underline{11,5}, ...\\
v=1,2,...
\end{array}
\right.$
while
$\check{s}=\left\{
\begin{array}{l}
\check{u}=\underline{6,4} ~ \underline{1}, ...\\
\check{v}=2,1,...
\end{array}
\right.$.

So 
$$d_{\mathds{S}_{\mathsf{N},\mathcal{P}}}(s,\check{s}) = 0.01~0004000000000000000000~01~1005...$$
Indeed, the $p=2$ first digits are 01, as $|v^0-\check{v}^0|=1$, 
and we use $p$ digits to code this difference ($\mathcal{P}$ being $\{1,2,11\}$, this difference can be equal to 10). We then take the $v^0=1$ first terms of $u$, each term being coded in $n=2$ digits, that is, 06. As we can iterate
at most $\max{(\mathcal{P})}$ times, we must complete this
value by some 0's in such a way that the obtained result
has $n\times \max{(\mathcal{P})}=22$ digits, that is: 
0600000000000000000000. Similarly, the first $\check{v}^0=2$ 
terms in $\check{u}$ are represented by 0604000000000000000000, and the value of their
digit per digit absolute difference is equal to 0004000000000000000000. These digits are concatenated to 01, and
we start again with the remainder of the sequences.
\end{xpl}

\begin{xpl}
Consider now that $\mathsf{N}=9$ ($n=1$), $\mathcal{P}=\{2,7\}$ ($\mathsf{p}=2, p=1$), and that

$s=\left\{
\begin{array}{l}
u=\underline{6,7,} ~ \underline{4,2,} ...\\
v=2,2,...
\end{array}
\right.$\\
while
$\check{s}=\left\{
\begin{array}{l}
\check{u}=\underline{4, 9, 6, 3, 6, 6, 7,} ~ \underline{9, 8}, ...\\
\check{v}=7,2,...
\end{array}
\right.$

So: 
$d_{\mathds{S}_{\mathsf{N},\mathcal{P}}}(s,\check{s}) = 0.5~2263667~1~5600000...$. 
\end{xpl}

$d$ can be more rigorously written as follows:
$$d(x,\check{x})=d_{\mathds{S}_{\mathsf{N},\mathcal{P}}}(s,\check{s})+d_{\mathds{B}^\mathsf{N}}(e,\check{e}),$$
where: 
\begin{itemize}
\item $d_{\mathds{B}^\mathsf{N}}$ is the Hamming distance,
\item $\forall s=(u,v), \check{s}=(\check{u},\check{v}) \in \mathcal{S}_{\mathsf{N},\mathcal{P}}$,\newline 
\[
\begin{array}{l}
 d_{\mathds{S}_{\mathsf{N},\mathcal{P}}}(s,\check{s}) = \\
  \quad  \sum_{k=0}^\infty \dfrac{1}{10^{(k+1)p+kn\max{(\mathcal{P})}}} 
   \bigg(|v^k - \check{v}^k|  \\
   \quad\quad + \left| \sum_{l=0}^{v^k-1} 
       \dfrac{u^{\sum_{m=0}^{k-1} v^m +l}}{ 10^{(l+1)n}} -
       \sum_{l=0}^{\check{v}^k-1} 
       \dfrac{\check{u}^{\sum_{m=0}^{k-1} \check{v}^m +l}}{ 10^{(l+1)n}} \right| \bigg)
\end{array}
\]
\end{itemize}

Let us show that,
\begin{prpstn}
$d$ is a distance on $\mathcal{X}_{\mathsf{N},\mathcal{P}}$.
\end{prpstn}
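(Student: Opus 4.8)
The plan is to verify the three defining axioms of a distance (also called metric): non-negativity with identity of indiscernibles, symmetry, and the triangle inequality. Since $d$ is defined as the sum $d(x,\check{x}) = d_{\mathds{S}_{\mathsf{N},\mathcal{P}}}(s,\check{s}) + d_{\mathds{B}^\mathsf{N}}(e,\check{e})$, and since a finite sum of metrics on a product space is again a metric, it suffices to check that each summand is a metric on its respective factor. The Hamming distance $d_{\mathds{B}^\mathsf{N}}$ is a classical metric, so the real work concentrates entirely on $d_{\mathds{S}_{\mathsf{N},\mathcal{P}}}$.

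First I would address \emph{well-definedness and finiteness}: I must check that the infinite series defining $d_{\mathds{S}_{\mathsf{N},\mathcal{P}}}$ converges for every pair $(s,\check{s})$. The key observation is that each term is bounded. The factor $|v^k - \check{v}^k|$ is at most $\max(\mathcal{P})$, and each inner sum $\sum_{l=0}^{v^k-1} u^{\cdots}/10^{(l+1)n}$ is bounded because there are at most $\max(\mathcal{P})$ terms, each index lies in $\llbracket 1, \mathsf{N}\rrbracket$ (so the numerator has at most $n$ digits), and the denominators $10^{(l+1)n}$ force geometric decay. Thus the $k$-th summand is $O(1)$, and the outer prefactor $10^{-((k+1)p + kn\max(\mathcal{P}))}$ decays geometrically in $k$, guaranteeing convergence by comparison with a geometric series. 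This also shows $d \geq 0$.

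Next, for \emph{identity of indiscernibles}, I would argue both directions. If $s = \check{s}$, every term vanishes term-by-term, so $d_{\mathds{S}_{\mathsf{N},\mathcal{P}}} = 0$. Conversely, if $d_{\mathds{S}_{\mathsf{N},\mathcal{P}}}(s,\check{s}) = 0$, then since every summand is non-negative, each must vanish; proceeding inductively on $k$, the term $|v^k - \check{v}^k| = 0$ forces $v^k = \check{v}^k$ for all $k$, and then the matching inner absolute-difference terms force the corresponding blocks of $u$ and $\check{u}$ to coincide, yielding $s = \check{s}$. \emph{Symmetry} is immediate since each term depends on the data only through the symmetric quantities $|v^k - \check{v}^k|$ and an absolute value $|\,\cdots\,|$.

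The main obstacle is the \emph{triangle inequality} for $d_{\mathds{S}_{\mathsf{N},\mathcal{P}}}$. The natural strategy is to apply the triangle inequality for the ordinary absolute value termwise: for fixed $k$, one has $|v^k - \hat{v}^k| \leq |v^k - \check{v}^k| + |\check{v}^k - \hat{v}^k|$, and similarly for the inner sums, $\bigl| A_k(s) - A_k(\hat{s}) \bigr| \leq \bigl| A_k(s) - A_k(\check{s}) \bigr| + \bigl| A_k(\check{s}) - A_k(\hat{s}) \bigr|$ where $A_k$ denotes the inner weighted sum. Summing over $k$ with the positive prefactors then yields the desired inequality. The delicate point I expect to confront is whether the inner sum $A_k$ can be treated as a single real-valued function of the sequence whose absolute difference behaves additively; the subtlety is that when $v^k \neq \check{v}^k$ the two inner sums have different numbers of terms, so I must confirm that the positional-digit encoding genuinely realizes $A_k(s) - A_k(\check{s})$ as a difference of two fixed real numbers (the weighted digit-strings), so that the scalar triangle inequality applies cleanly rather than requiring a separate combinatorial argument about misaligned blocks.
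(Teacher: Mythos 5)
Your proposal is correct and follows essentially the same route as the paper: reduce to showing $d_{\mathds{S}_{\mathsf{N},\mathcal{P}}}$ is a metric (the Hamming part being classical, and a sum of metrics being a metric), establish separation by induction over the blocks of digits, note symmetry, and obtain the triangle inequality termwise from that of the absolute value. Your extra care — checking convergence of the series, and observing that each inner sum is a single real number depending on $s$ alone so that the scalar triangle inequality applies even when $v^k \neq \check{v}^k$ — simply makes explicit what the paper compresses into its one-sentence triangle-inequality remark.
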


\begin{proof}
 $d_{\mathds{B}^\mathsf{N}}$ is the Hamming distance. We will prove that 
 $d_{\mathds{S}_{\mathsf{N},\mathcal{P}}}$ is a distance
too, thus $d$ will also be a distance, being the sum of two distances.
 \begin{itemize}
\item Obviously, $d_{\mathds{S}_{\mathsf{N},\mathcal{P}}}(s,\check{s})\geqslant 0$, and if $s=\check{s}$, then 
$d_{\mathds{S}_{\mathsf{N},\mathcal{P}}}(s,\check{s})=0$. Conversely, if $d_{\mathds{S}_{\mathsf{N},\mathcal{P}}}(s,\check{s})=0$, then 
$\forall k \in \mathds{N}, v^k=\check{v}^k$ due to the 
definition of $d$. Then, as digits between positions $p+1$ and $p+n$ are null and correspond to $|u^0-\check{u}^0|$, we can conclude that $u^0=\check{u}^0$. An extension of this result to the whole first $n \times \max{(\mathcal{P})}$ blocs leads to $u^i=\check{u}^i$, $\forall i \leqslant v^0=\check{v}^0$, and by checking all the $n \times \max{(\mathcal{P})}$ blocs, $u=\check{u}$.
 \item $d_{\mathds{S}_{\mathsf{N},\mathcal{P}}}$ is clearly symmetric 
($d_{\mathds{S}_{\mathsf{N},\mathcal{P}}}(s,\check{s})=d_{\mathds{S}_{\mathsf{N},\mathcal{P}}}(\check{s},s)$). 
\item The triangle inequality is obtained because the absolute value satisfies it as well.
 \end{itemize}
\end{proof}

Before being able to study the topological behavior of the general 
chaotic iterations, we must first establish that:

\begin{prpstn}
 For all $f:\mathds{B}^\mathsf{N} \longrightarrow \mathds{B}^\mathsf{N} $, the function $G_f$ is continuous on 
$\left( \mathcal{X},d\right)$.
\end{prpstn}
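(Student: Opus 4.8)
The plan is to establish continuity pointwise by an $\varepsilon$--$\delta$ argument, exploiting the fact that $d$ splits as the integer-valued Hamming part $d_{\mathds{B}^\mathsf{N}}$ plus the fractional part $d_{\mathds{S}_{\mathsf{N},\mathcal{P}}}\in[0,1)$, whose successive decimal blocks encode, in order, $|v^0-\check v^0|$, then the first $v^0$ differences of the $u$-sequence, then $|v^1-\check v^1|$, and so on. I would fix $x=(e,(u,v))$ and $\varepsilon>0$ and seek $\delta>0$. Since the fractional part of any distance is automatically below $1$, the only genuine requirements are to force the Hamming part of $d(G_f(x),G_f(\check x))$ to vanish and to push its fractional part below $\varepsilon$.

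First I would exploit the geometric decay of $d_{\mathds{S}_{\mathsf{N},\mathcal{P}}}$. Its $k$-th block is weighted by $10^{-((k+1)p+kn\max(\mathcal{P}))}$, while the bracketed contribution is bounded independently of $k$ (by $\max(\mathcal{P})$ for the $|v^k-\check v^k|$ term and by $1$ for the $u$-term, since each $u$-sum is a proper decimal fraction). Hence the tail $\sum_{k\geqslant k_0}$ is a convergent geometric remainder, and I can choose an integer $k_0$ making it strictly less than $\varepsilon$. It then suffices to make the first $k_0$ blocks of $d_{\mathds{S}_{\mathsf{N},\mathcal{P}}}\bigl(\Sigma(u,v),\Sigma(\check u,\check v)\bigr)$ equal to zero.

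Next I would choose $\delta$. Taking $\delta<1$ already forces the integral parts of $x$ and $\check x$ to coincide, so $e=\check e$. Reading off the decimal blocks of the fractional part, imposing $d(x,\check x)<\delta$ with $\delta$ smaller than the place value of a suitably far block forces $v^k=\check v^k$ and $u^k=\check u^k$ for every index occurring in the first $k_0+1$ blocks of $(u,v)$; let $M$ be the largest such sequence index. In particular $v^0=\check v^0$ and $u^j=\check u^j$ for all $j<v^0$, so by the very definition of $F_{f,v^0}$ the two binary images coincide, namely $F_{f,v^0}(e,(u^0,\dots,u^{v^0-1}))=F_{f,\check v^0}(\check e,(\check u^0,\dots,\check u^{\check v^0-1}))$, which annihilates the Hamming part of $d(G_f(x),G_f(\check x))$. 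Moreover, because $v^0=\check v^0$, the shift $\Sigma$ displaces both $u$-sequences by the same amount, so the agreement up to index $M$ of $(u,v)$ and $(\check u,\check v)$ transfers verbatim to agreement of the first $k_0$ blocks of $\Sigma(u,v)$ and $\Sigma(\check u,\check v)$; combined with the tail estimate this yields $d_{\mathds{S}_{\mathsf{N},\mathcal{P}}}<\varepsilon$, whence $d(G_f(x),G_f(\check x))<\varepsilon$.

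The main obstacle is the bookkeeping forced by the variable shift $\Sigma$: the $u$-sequence is advanced by $v^0$ positions rather than by a fixed number, so one must translate between decimal-block indices of the distance and actual sequence indices through the partial sums $\sum_m v^m$. The saving grace is that once $\delta$ is below $1$ and small enough to equalise the leading blocks, one automatically obtains $v^0=\check v^0$, so both points are shifted by the same amount and the index arithmetic lines up cleanly; what remains are then routine geometric-series bounds on the decimal expansion.
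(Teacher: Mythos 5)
Your proof is correct and follows essentially the same route as the paper's: both arguments force, for points close enough (distance below $10^{-(p+n\max(\mathcal{P}))}$, i.e.\ leading block of digits null), the equalities $e=\check e$, $v^0=\check v^0$ and $u^j=\check u^j$ for $j<v^0$, which make the Boolean parts of the two images coincide and reduce everything to controlling the distance between the shifted sequences. The only differences are inessential: you argue by $\varepsilon$--$\delta$ with a geometric tail estimate on the blocks of $d_{\mathds{S}_{\mathsf{N},\mathcal{P}}}$, whereas the paper argues by sequential continuity and invokes the exact identity that, once the leading block vanishes, applying $\Sigma$ multiplies $d_{\mathds{S}_{\mathsf{N},\mathcal{P}}}$ by $10^{p+n\max(\mathcal{P})}$.
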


\begin{proof}
We will show this result by using the sequential continuity. Consider a
sequence $x^n=(e^n,(u^n,v^n)) \in \mathcal{X}_{\mathsf{N},\mathcal{P}}^\mathds{N}$ such
that $d(x^n,x) \longrightarrow 0$, for some $x=(e,(u,v))\in
\mathcal{X}_{\mathsf{N},\mathcal{P}}$. We will show that
$d\left(G_f(x^n),G_f(x)\right) \longrightarrow 0$.
Remark that $u$ and $v$ are sequences of sequences.

As $d(x^n,x) \longrightarrow 0$, there exists 
$n_0\in\mathds{N}$ such that 
$d(x^n,x) < 10^{-(p+n \max{(\mathcal{P})})}$
(its $p+n \max{(\mathcal{P})}$ first digits are null). 
In particular, $\forall n \geqslant n_0, e^n=e$,
as the Hamming distance between the integral parts of
$x$ and $\check{x}$ is 0. Similarly, due to the nullity 
of the $p+n \max{(\mathcal{P})}$ first digits of 
$d(x^n,x)$, we can conclude that $\forall n \geqslant n_0$,
$(v^n)^0=v^0$, and that $\forall n \geqslant n_0$,
$(u^n)^0=u^0$, $(u^n)^1=u^1$, ..., $(u^n)^{v^0-1}=u^{v^0-1}$.
This implies that:
\begin{itemize}
\item $G_f(x^n)_1=G_f(x)_1$: they have the same
Boolean vector as first coordinate.
\item $d_{\mathds{S}_{\mathsf{N},\mathcal{P}}}(\Sigma (u^n,v^n); \Sigma(u,v)) = 10^{p+n \max{(\mathcal{P})}} d_{\mathds{S}_{\mathsf{N},\mathcal{P}}}((u^n,v^n); (u,v))$. As the right part of the equality tends
to 0, we can deduce that it is also the case for the left part of the equality, and so
$G_f(x^n)_2$ is convergent to $G_f(x)_2$.
\end{itemize}
\end{proof}

\subsection{$\Gamma_{\mathcal{P}}(f)$ as an extension of  $\Gamma(f)$}

Let $\mathcal{P}=\{p_1, p_2, \hdots, p_\mathsf{p}\}$.
We define the directed graph $\Gamma_{\mathcal{P}}(f)$ as follows.
\begin{itemize}
\item Its vertices are the $2^\mathsf{N}$ elements of $\mathds{B}^\mathsf{N}$.
\item Each vertex has $\displaystyle{\sum_{i=1}^\mathsf{p} \mathsf{N}^{p_i}}$ arrows, namely all the $p_1, p_2, \hdots, p_\mathsf{p}$ tuples 
  having their elements in $\llbracket 1, \mathsf{N} \rrbracket $.
\item There is an arc labeled $u_0, \hdots, u_{p_i-1}$, $i \in \llbracket 1, \mathsf{p} \rrbracket$ between vertices $x$ and $y$ if and only if 
$y=F_{f,p_i} (x, (u_0, \hdots, u_{p_i-1})) $.
\end{itemize}

It is not hard to see that the graph $\Gamma_{\{1\}}(f)$ is 
$\Gamma(f)$ formerly introduced in~\cite{bcgr11:ip} for the $\textit{CIPRNG}_f^1(u)$ generator,
which is indeed $\textit{CIPRNG}_f^2(u,(1)_{n \in \mathds{N}})$.

\begin{figure}[ht]
  \centering
  \subfigure[$\Gamma(f_0)$]{
      \begin{minipage}{0.45\textwidth}
        \centering
        \includegraphics[scale=0.85]{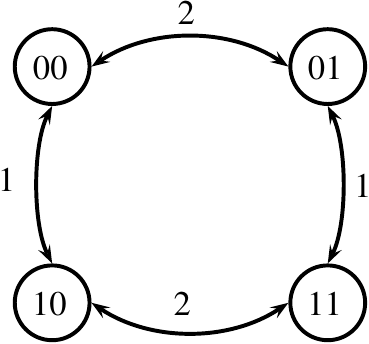}
      \end{minipage}
    \label{graphe1}
    }
    \subfigure[$\Gamma_{\{2,3\}}(f_0)$]{
      \begin{minipage}{0.3\textwidth}
        \centering
          \includegraphics[scale=0.85]{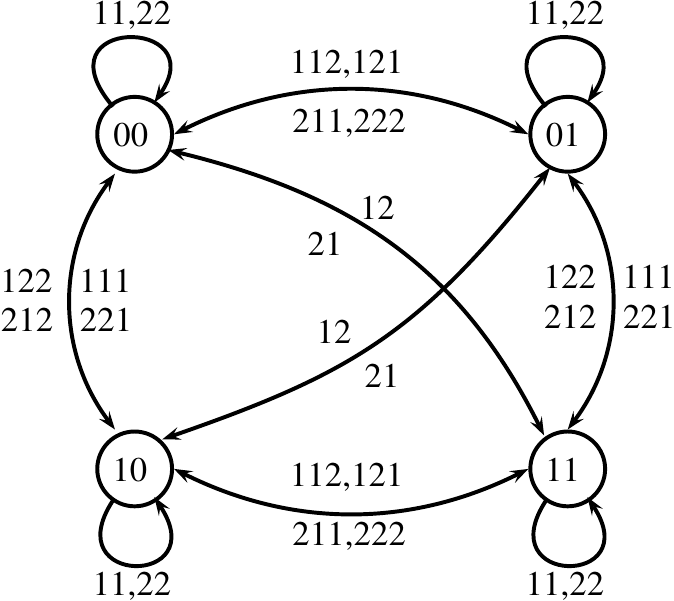}
      \end{minipage}
    \label{graphe2}
    }
  \caption{Iterating $f_0:(x_1,x_2) \mapsto (\overline{x_1}, \overline{x_2})$}
  \label{fig:itg}
\end{figure}

\begin{xpl}
Consider for instance $\mathsf{N}=2$, 
Let $f_0:\mathds{B}^2 \longrightarrow \mathds{B}^2$ be the negation function,
\textit{i.e.}, $f_0(x_1,x_2) = (\overline{x_1}, \overline{x_2})$, and consider
$\mathcal{P}=\{2,3\}$. The graphs of iterations are given in 
Figure~\ref{fig:itg}.
Figure~\ref{graphe1} shows what happens when each iteration result  
is displayed .
On the contrary, Figure~\ref{graphe2} illustrates what happens when  2 or 3 modifications 
are systematically applied 
before results are generated. 
Notice that here, the orientations of arcs are not necessary 
since the function $f_0$ is equal to its inverse $f_0^{-1}$. 
\end{xpl}

\subsection{Proofs of chaos}

We will show that,
\begin{prpstn}
\label{prop:trans}
 $\Gamma_{\mathcal{P}}(f)$ is strongly connected if and only if $G_f$ is 
topologically transitive on $(\mathcal{X}_{\mathsf{N},\mathcal{P}}, d)$.
\end{prpstn}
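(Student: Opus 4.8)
The plan is to prove the two implications separately, exploiting the way the distance $d$ organizes $\mathcal{X}_{\mathsf{N},\mathcal{P}}$. The crucial preliminary observation is that a ball of sufficiently small radius around a point $x=(e,(u,v))$ consists exactly of the points $(\check e,(\check u,\check v))$ that share the Boolean configuration $e$ (because $d_{\mathds{B}^\mathsf{N}}$ forms the integral part, so a radius below $1$ forces Hamming distance $0$) and that coincide with $(u,v)$ on a finite prefix of blocks, the number of constrained blocks growing as the radius shrinks; the tail of the strategy is left entirely free. Dually, $G_f$ consumes one such block per application: after $k$ steps the Boolean component is precisely the endpoint of the length-$k$ walk in $\Gamma_{\mathcal{P}}(f)$ whose successive edges are labelled by the first $k$ blocks $\big(v^0,(u^0,\dots,u^{v^0-1})\big),\big(v^1,\dots\big),\dots$ of the strategy. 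Since topological transitivity need only be checked on a basis, I would reduce both directions to statements about such balls.

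For the implication \emph{strongly connected $\Rightarrow$ transitive}, I would take arbitrary nonempty open sets $U$ and $V$, pick $x=(e,(u,v))\in U$ and $\check x=(\check e,(\check u,\check v))\in V$, and fix radii so that $B(x,\varepsilon)\subseteq U$ constrains the first $K_1\ge 1$ blocks and $B(\check x,\delta)\subseteq V$ constrains the first $K_2$ blocks. I then build a single point $y\in U$ as follows: its first $K_1$ blocks coincide with those of $x$, so $y\in B(x,\varepsilon)\subseteq U$; let $e'$ be the Boolean configuration reached from $e$ after these $K_1$ steps. By strong connectivity of $\Gamma_{\mathcal{P}}(f)$ there is a walk of some length $\ell$ from $e'$ to $\check e$, and I use its edge labels as the next $\ell$ blocks of $y$. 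Finally I set the blocks $K_1+\ell,\dots,K_1+\ell+K_2-1$ of $y$ equal to the first $K_2$ blocks of $\check x$. Then $G_f^{\,K_1+\ell}(y)$ has Boolean part $\check e$ and a strategy matching that of $\check x$ on its first $K_2$ blocks, hence lies in $B(\check x,\delta)\subseteq V$; thus $G_f^{\,k}(U)\cap V\neq\varnothing$ with $k=K_1+\ell>0$.

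For the converse \emph{transitive $\Rightarrow$ strongly connected}, I would fix two arbitrary vertices $e,\check e\in\mathds{B}^\mathsf{N}$ and apply transitivity to the balls $U=B\big((e,(u,v)),\rho\big)$ and $V=B\big((\check e,(\check u,\check v)),\rho\big)$ for arbitrary fixed strategies and a radius $\rho<1$. Transitivity yields $k>0$ and a point $z\in U$ with $G_f^{\,k}(z)\in V$. Since $\rho<1$, the Boolean part of $z$ is $e$ and that of $G_f^{\,k}(z)$ is $\check e$; by the correspondence recalled above, the first $k$ blocks of the strategy of $z$ label a walk of length $k$ from $e$ to $\check e$ in $\Gamma_{\mathcal{P}}(f)$. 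As $e$ and $\check e$ were arbitrary, every vertex is reachable from every other, i.e.\ $\Gamma_{\mathcal{P}}(f)$ is strongly connected.

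The conceptual content is light; the hard part will be the metric bookkeeping asserted in the first paragraph, namely translating \og$d$ small\fg{} into \og agreement of the Boolean coordinate together with a prescribed number of leading blocks, with a free tail\fg{}, and checking that one application of $G_f$ shifts the strategy by exactly one block so that $k$ iterations realize a length-$k$ walk in $\Gamma_{\mathcal{P}}(f)$. Dispatching the slightly awkward cases of the block encoding (when $v^k\neq\check v^k$ the $u$-blocks are padded with zeros) is routine but must be done to justify that small balls freeze only a finite prefix. A minor technicality is to guarantee $k>0$, which is ensured by insisting that $K_1\ge 1$ in the first implication.
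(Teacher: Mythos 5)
Your proof is correct and follows essentially the same route as the paper's: small balls contain every point agreeing on the Boolean coordinate and a finite prefix of strategy blocks, strong connectivity supplies the intermediate path of edge-labels to splice into the strategy, and conversely transitivity applied to radius-$<1$ balls around two Boolean states forces the first $k$ blocks of the strategy to label a walk between them in $\Gamma_{\mathcal{P}}(f)$. The only cosmetic differences are that the paper makes the orbit land exactly on $\check{x}$ (by appending its whole strategy, giving a ``strong transitivity''), whereas you only land inside $B(\check{x},\delta)$, and that the paper proves the converse by contraposition rather than directly.
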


\begin{proof}
Suppose that $\Gamma_{\mathcal{P}}(f)$ is strongly connected. 
Let $x=(e,(u,v)),\check{x}=(\check{e},(\check{u},\check{v})) 
\in \mathcal{X}_{\mathsf{N},\mathcal{P}}$ and $\varepsilon >0$.
We will find a point $y$ in the open ball $\mathcal{B}(x,\varepsilon )$ and
$n_0 \in \mathds{N}$ such that $G_f^{n_0}(y)=\check{x}$: this strong transitivity
will imply the transitivity property.
We can suppose that $\varepsilon <1$ without loss of generality. 

Let us denote by $(E,(U,V))$  the elements of $y$. As
$y$ must be in $\mathcal{B}(x,\varepsilon)$ and  $\varepsilon < 1$,
$E$ must be equal to $e$. Let $k=\lfloor \log_{10} (\varepsilon) \rfloor +1$.
$d_{\mathds{S}_{\mathsf{N},\mathcal{P}}}((u,v),(U,V))$ must be lower than
$\varepsilon$, so the $k$ first digits of the fractional part of 
$d_{\mathds{S}_{\mathsf{N},\mathcal{P}}}((u,v),(U,V))$ are null.
Let $k_1$ be the smallest integer such that, if $V^0=v^0$, ...,  $V^{k_1}=v^{k_1}$,
 $U^0=u^0$, ..., $U^{\sum_{l=0}^{k_1}V^l-1} = u^{\sum_{l=0}^{k_1}v^l-1}$.
Then $d_{\mathds{S}_{\mathsf{N},\mathcal{P}}}((u,v),(U,V))<\varepsilon$.
In other words, any $y$ of the form $(e,((u^0, ..., u^{\sum_{l=0}^{k_1}v^l-1}),
(v^0, ..., v^{k_1}))$ is in $\mathcal{B}(x,\varepsilon)$.

Let $y^0$ such a point and $z=G_f^{k_1}(y^0) = (e',(u',v'))$. $\Gamma_{\mathcal{P}}(f)$
being strongly connected, there is a path between $e'$ and $\check{e}$. Denote
by $a_0, \hdots, a_{k_2}$ the edges visited by this path. We denote by
$V^{k_1}=|a_0|$ (number of terms in the finite sequence $a_1$),
$V^{k_1+1}=|a_1|$, ..., $V^{k_1+k_2}=|a_{k_2}|$, and by 
$U^{k_1}=a_0^0$, $U^{k_1+1}=a_0^1$, ..., $U^{k_1+V_{k_1}-1}=a_0^{V_{k_1}-1}$,
$U^{k_1+V_{k_1}}=a_1^{0}$, $U^{k_1+V_{k_1}+1}=a_1^{1}$,...

Let

\begin{eqnarray*}
y&=&(e,(
(u^0, \dots, u^{\sum_{l=0}^{k_1}v^l-1}, a_0^0, \dots, a_0^{|a_0|}, a_1^0, \dots, 
a_1^{|a_1|},\dots, a_{k_2}^0, \dots, a_{k_2}^{|a_{k_2}|},
 \check{u}^0, \check{u}^1, \dots), \\
&&\qquad(v^0, \dots, v^{k_1},|a_0|, \dots,
 |a_{k_2}|,\check{v}^0, \check{v}^1, \dots))).
\end{eqnarray*}
So $y\in \mathcal{B}(x,\varepsilon)$
 and $G_{f}^{k_1+k_2}(y)=\check{x}$.

Conversely, if $\Gamma_{\mathcal{P}}(f)$ is not strongly connected, then there are 
2 vertices $e_1$ and $e_2$ such that there is no path between $e_1$ and $e_2$.
Thus, it is impossible to find $(u,v)\in \mathds{S}_{\mathsf{N},\mathcal{P}}$
and $n\in \mathds{N}$ such that $G_f^n(e,(u,v))_1=e_2$. The open ball $\mathcal{B}(e_2, 1/2)$
cannot be reached from any neighborhood of $e_1$, and thus $G_f$ is not transitive.
\end{proof}

We now show  that,
\begin{prpstn}
If $\Gamma_{\mathcal{P}}(f)$ is strongly connected, then $G_f$ is 
regular on $(\mathcal{X}_{\mathsf{N},\mathcal{P}}, d)$.
\end{prpstn}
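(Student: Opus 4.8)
The plan is to mimic the construction used in the proof of Proposition~\ref{prop:trans}, but instead of steering the walk toward a prescribed target $\check{x}$, I would steer it back to its own starting configuration and then repeat the resulting finite pattern forever, producing a genuinely periodic point arbitrarily close to any given $x$. Concretely, I would fix $x=(e,(u,v))\in\mathcal{X}_{\mathsf{N},\mathcal{P}}$ and $\varepsilon>0$, assuming $\varepsilon<1$ without loss of generality so that every point of $\mathcal{B}(x,\varepsilon)$ is forced to have $e$ as its Boolean component. Exactly as in the transitivity proof, I would choose $k_1$ large enough that freezing the blocks $v^0,\dots,v^{k_1}$ together with the corresponding initial terms $u^0,\dots,u^{\sum_{l=0}^{k_1}v^l-1}$ already drives the fractional part of $d_{\mathds{S}_{\mathsf{N},\mathcal{P}}}$ below $\varepsilon$; hence any sequence agreeing with $(u,v)$ on this prefix yields a point of $\mathcal{B}(x,\varepsilon)$.

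Next I would let $e'$ be the Boolean configuration reached after consuming this frozen prefix, that is, the first component of $G_f^{k_1+1}(e,(u,v))$. Since $\Gamma_{\mathcal{P}}(f)$ is strongly connected, there is a path from $e'$ back to $e$; I denote its successive edge labels $a_0,\dots,a_{k_2}$, each $a_j$ being a tuple over $\llbracket 1,\mathsf{N}\rrbracket$ whose length $|a_j|$ lies in $\mathcal{P}$ (this membership is what guarantees that the $v$-part I build below is a legitimate element of $\mathcal{P}^{\Nats}$). Concatenating the prefix with these return edges produces one finite block: a finite $v$-part $(v^0,\dots,v^{k_1},|a_0|,\dots,|a_{k_2}|)$ together with the matching finite $u$-part. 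Running $G_f$ through a full block sends $e$ to $e'$ via the prefix and then $e'$ back to $e$ via the path, so a block returns the Boolean component to $e$.

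Finally I would define $(U,V)$ to be the infinite periodic repetition of this block and set $y=(e,(U,V))$. The point $y$ lies in $\mathcal{B}(x,\varepsilon)$ because it agrees with $(u,v)$ on the frozen prefix. Writing $P=(k_1+1)+(k_2+1)$ for the number of outputs in one block, the decisive observation is that one period of $\Sigma$ consumes exactly one $v$-block and the corresponding $u$-terms, so $\Sigma^{P}(U,V)=(U,V)$; combined with the fact that one block returns the Boolean part to $e$, this gives $G_f^{P}(y)=(e,\Sigma^{P}(U,V))=y$, so $y$ is a periodic point. As $x$ and $\varepsilon$ were arbitrary, periodic points are dense in $\mathcal{X}_{\mathsf{N},\mathcal{P}}$ and $G_f$ is regular.

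The main obstacle I expect is the bookkeeping that guarantees genuine periodicity: I must ensure that $\Sigma^{P}$ advances the $v$-sequence by exactly $P$ positions and the $u$-sequence by precisely the total number of labels in one block (which is the sum of that block's $v$-entries), so that the repeated block really maps back to itself under $\Sigma^{P}$. The digit-counting argument placing $y$ within $\varepsilon$ is routine once the estimate from the transitivity proof is reused, but correctly aligning these two shift amounts for the periodic sequence is the delicate point.
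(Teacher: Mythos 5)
Your proposal is correct and follows the same route as the paper's own proof: freeze a prefix of $(u,v)$ long enough (via the $k_1$ estimate borrowed from the transitivity proof) to stay in $\mathcal{B}(x,\varepsilon)$, use strong connectivity of $\Gamma_{\mathcal{P}}(f)$ to find a return path $a_0,\dots,a_{k_2}$ from the reached Boolean state back to $e$, and repeat the resulting block periodically to obtain a periodic point in the ball. Your explicit treatment of the shift alignment ($\Sigma^{P}$ consuming exactly one block) and of the fact that the path-edge lengths lie in $\mathcal{P}$ only makes precise details that the paper leaves implicit.
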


\begin{proof}
Let $x=(e,(u,v)) \in \mathcal{X}_{\mathsf{N},\mathcal{P}}$ and $\varepsilon >0$. 
As in the proofs of Prop.~\ref{prop:trans}, let $k_1 \in \mathds{N}$ such
that 
$$\left\{(e, ((u^0, \dots, u^{v^{k_1-1}},U^0, U^1, \dots),(v^0, \dots, v^{k_1},V^0, V^1, \dots)) \mid \right.$$
$$\left.\forall i,j \in \mathds{N}, U^i \in \llbracket 1, \mathsf{N} \rrbracket, V^j \in \mathcal{P}\right\}
\subset \mathcal{B}(x,\varepsilon),$$
and $y=G_f^{k_1}(e,(u,v))$. $\Gamma_{\mathcal{P}}(f)$ being strongly connected,
there is at least a path from the Boolean state $y_1$ of $y$ to $e$.
Denote by $a_0, \hdots, a_{k_2}$ the edges of such a path.
Then the point:\linebreak
$(e,((u^0, \dots, u^{v^{k_1-1}},a_0^0, \dots, a_0^{|a_0|}, a_1^0, \dots, a_1^{|a_1|},\dots, 
 a_{k_2}^0, \dots,$ \linebreak 
$\,a_{k_2}^{|a_{k_2}|},u^0, \dots, u^{v^{k_1-1}},a_0^0, \dots,a_{k_2}^{|a_{k_2}|}\dots),$\linebreak
$(v^0, \dots, v^{k_1}, |a_0|, \dots, |a_{k_2}|,v^0, \dots, v^{k_1}, |a_0|, \dots, |a_{k_2}|,\dots))$
is a periodic point in the neighborhood $\mathcal{B}(x,\varepsilon)$ of $x$.
\end{proof}

$G_f$ being topologically transitive and regular, we can thus conclude that
\begin{thrm}
Function $G_f$ is chaotic on $(\mathcal{X}_{\mathsf{N},\mathcal{P}},d)$ if
and only if its iteration graph $\Gamma_{\mathcal{P}}(f)$ is strongly connected.
\end{thrm}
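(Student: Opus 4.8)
The plan is to assemble the theorem directly from the ingredients already established, since Devaney's formulation of chaos requires precisely continuity, topological transitivity, and regularity, and each of these has been addressed in the preceding results. First I would recall that the continuity of $G_f$ on $(\mathcal{X}_{\mathsf{N},\mathcal{P}},d)$ has already been proven \emph{unconditionally}, that is, for every Boolean map $f$. This point is worth stating explicitly, because the very definition of Devaney chaos presupposes a continuous self-map; so the framework in which the words ``chaotic'' make sense is always in place, independently of any hypothesis on $\Gamma_{\mathcal{P}}(f)$.

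For the direction $\Leftarrow$, I would assume $\Gamma_{\mathcal{P}}(f)$ strongly connected. Proposition~\ref{prop:trans} then yields that $G_f$ is topologically transitive, and the subsequent proposition yields that $G_f$ is regular on $(\mathcal{X}_{\mathsf{N},\mathcal{P}},d)$. Combining transitivity and regularity with the ambient continuity, Devaney's definition immediately gives that $G_f$ is chaotic on $(\mathcal{X}_{\mathsf{N},\mathcal{P}},d)$.

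For the converse $\Rightarrow$, I would assume $G_f$ chaotic. By definition, a chaotic map is in particular topologically transitive. Since Proposition~\ref{prop:trans} is an \emph{equivalence}, the transitivity of $G_f$ forces $\Gamma_{\mathcal{P}}(f)$ to be strongly connected, which closes the loop and establishes the biconditional.

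There is no substantial computational obstacle here: the entire content of the theorem was front-loaded into the preceding propositions, and this final step is purely a matter of invoking Devaney's definition in both directions. The one point that requires care, and which I would flag as the only genuine subtlety, is that regularity was established merely as a one-directional implication (strong connectivity $\Rightarrow$ regularity). Consequently the converse must \emph{not} be routed through regularity; it must instead exploit the biconditional nature of Proposition~\ref{prop:trans}, using transitivity alone to recover strong connectivity.
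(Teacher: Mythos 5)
Your proposal is correct and follows essentially the same route as the paper: the theorem is assembled by combining the continuity of $G_f$, the equivalence of Proposition~\ref{prop:trans} (strong connectivity $\Leftrightarrow$ transitivity), and the one-directional regularity proposition, exactly as the paper's one-line conclusion intends. Your explicit remark that the converse direction must go through the transitivity equivalence rather than regularity is a correct and careful reading of the same argument.
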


\begin{crllr}
  The pseudorandom number generator $\chi_{\textit{14Secrypt}}$ is not chaotic
  on $(\mathcal{X}_{\mathsf{N},\{b\}},d)$ for the negation function.
\end{crllr}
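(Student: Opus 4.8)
The plan is to lean entirely on the theorem immediately preceding the corollary: $G_f$ is chaotic on $(\mathcal{X}_{\mathsf{N},\mathcal{P}},d)$ if and only if $\Gamma_{\mathcal{P}}(f)$ is strongly connected. Applied with $f=f_0$ the negation function and $\mathcal{P}=\{b\}$, this reduces the corollary to a purely combinatorial statement: it suffices to show that $\Gamma_{\{b\}}(f_0)$ fails to be strongly connected. So after this first reduction I would not touch the metric $d$ or any topological notion at all; everything reduces to reachability in a finite digraph on $\mathds{B}^{\mathsf{N}}$.

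The core of the argument is a parity invariant. Since $f_0$ is the negation, each elementary step $F_{f_0}(x,i)$ flips exactly the $i$-th coordinate of $x$ and leaves the others unchanged; hence it changes the Hamming weight $w(x)$ (the number of $1$'s in $x$) by $\pm 1$. An arc of $\Gamma_{\{b\}}(f_0)$ is by definition a $b$-fold composition $F_{f_0,b}(x,(u_0,\dots,u_{b-1}))$, so traversing one arc changes $w$ by a quantity congruent to $b$ modulo $2$. When $b$ is even, every arc therefore preserves $w(x) \bmod 2$. I would then split $\mathds{B}^{\mathsf{N}}$ into the two classes of even and odd Hamming weight, both nonempty since $0\cdots 0$ has weight $0$ and $10\cdots 0$ has weight $1$. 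No arc of $\Gamma_{\{b\}}(f_0)$ joins a vertex of one class to a vertex of the other, so $10\cdots 0$ is unreachable from $0\cdots 0$; the graph is not strongly connected, and the theorem yields that $G_{f_0}$ is not chaotic, which is exactly the claim for $\chi_{\textit{14Secrypt}}$.

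The point that needs care — and which I regard as the real content rather than a routine obstacle — is the role of the parity of $b$. The invariant only blocks connectivity when $b$ is even; for odd $b$ every elementary step changes the parity of $w$, and two successive arcs can realize any even weight change, so one checks that $\Gamma_{\{b\}}(f_0)$ is in fact strongly connected (already $\Gamma_{\{1\}}(f_0)$ is the $\mathsf{N}$-cube, which is connected). Thus the honest reading of the corollary is that $\chi_{\textit{14Secrypt}}$ with the negation function fails to be chaotic precisely for even $b$, and my write-up would state this assumption explicitly. I would close by remarking that this negative example is exactly the motivation for the remainder of the paper: one needs a Boolean map whose iteration graph avoids such an obstruction, which is what removing a balanced Hamiltonian cycle from the $\mathsf{N}$-cube is designed to provide.
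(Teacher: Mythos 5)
Your even-$b$ argument is exactly the paper's own: its proof disposes of that case by noting that no vertex of $\Gamma_{\{b\}}(f_0)$ can reach its neighborhood, which is your parity invariant in different words (every arc changes the Hamming weight by an amount congruent to $b \bmod 2$, hence preserves weight parity when $b$ is even). The divergence is the odd case, which the paper does not exclude: its proof asserts that for odd $b$ ``no vertex $e$ of $\Gamma_{\{b\}}(f_0)$ can reach itself and thus $\Gamma_{\{b\}}(f_0)$ is not strongly connected,'' so the printed corollary claims non-chaos for every $b$. On this point you are right and the paper is wrong. For odd $b$ there is indeed no arc $(e,e)$, since a single arc flips an odd number of coordinates counted with multiplicity; but absence of self-loops is not failure of strong connectivity. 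The arc labelled $(i,i,\dots,i)$ of length $b$ nets exactly one flip of bit $i$ when $b$ is odd, so every edge of the $\mathsf{N}$-cube is an arc of $\Gamma_{\{b\}}(f_0)$; the graph is therefore strongly connected, and in particular every vertex reaches itself in two arcs. By the paper's own theorem (chaos if and only if strong connectivity), $G_{f_0}$ is then chaotic for odd $b$ --- consistent with the case $b=1$, where $\Gamma_{\{1\}}(f_0)=\Gamma(f_0)$ is the strongly connected graph underlying the original generator. The paper's odd-case sentence conflates reachability by a single arc with reachability by a directed path, and as printed the corollary actually contradicts the theorem it is supposed to follow from. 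So your restriction of the statement to even $b$ is not a gap in your proof but the correct repair of the corollary, and your write-up is right to state that parity hypothesis explicitly.
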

\begin{proof}
  In this context, $\mathcal{P}$ is the singleton $\{b\}$.
  If $b$ is even, no vertex $e$ of $\Gamma_{\{b\}}(f_0)$ can reach 
  its neighborhood and thus $\Gamma_{\{b\}}(f_0)$ is not strongly connected. 
  If $b$ is odd, no vertex $e$ of $\Gamma_{\{b\}}(f_0)$ can reach itself 
  and thus $\Gamma_{\{b\}}(f_0)$ is not strongly connected.
\end{proof}

\subsection{Comparison with other well-known generators}

\begin{table}
\centering
  \begin{tabular}{c|ccccccc}
  PRNG & LCG & MRG & AWC & SWB & SWC & GFSR & INV\\
  \hline
  NIST & 11 & 14 & 15 & 15 & 14 & 14 & 14\\
  DieHARD & 16 & 16 & 15 & 16 &18 & 16 & 16
  \end{tabular}
  \caption{Statistical evaluation of known PRNGs: number of succeeded tests}
  \label{table:comparisonWithout}
\end{table}

\begin{table}
\centering
  \begin{tabular}{c|ccccccc}
  PRNG & LCG & MRG & AWC & SWB & SWC & GFSR & INV\\
  \hline
  NIST & 15 & 15 & 15 & 15 & 15 & 15 & 15\\
  DieHARD & 18 & 18 & 18 & 18 & 18 & 18 & 18
  \end{tabular}
  \caption{Statistical effects of CIPRNG on the succeeded tests}
  \label{table:comparisonWith}
\end{table}
The objective of this section is to evaluate the statistical performance of the 
proposed CIPRNG method, by comparing the effects of its application on well-known
but defective generators. We considered during the experiments the following PRNGs:
linear congruential generator (LCG), multiple recursive generators (MRG)
add-with-carry (AWC), subtract-with-borrow (SWB), shift-with-carry (SWC)
Generalized Feedback Shift Register (GFSR), and nonlinear inversive generator.
A general overview and a reminder of these  generators 
can be found, for instance, in the documentation of the TestU01 statistical
battery of tests~\cite{LEcuyerS07}. For each studied generator, we have compared
their scores according to both NIST~\cite{Nist10} and DieHARD~\cite{Marsaglia1996}
statistical batteries of tests, by launching them alone or inside the $\textit{CIPRNG}_f^2(v,v)$
dynamical system, where $v$ is the considered PRNG set with most usual parameters,
and $f$ is the vectorial negation. 

Obtained results are reproduced in Tables~\ref{table:comparisonWithout} and \ref{table:comparisonWith}.
As can be seen, all these generators considered alone failed to pass either the 15 NIST tests or the
18 DieHARD ones, while both batteries of tests are always passed when applying the $\textit{CIPRNG}_f^2$
post-treatment. Other results in the same direction, which can be found in~\cite{bfgw11:ip}, illustrate
the fact that operating a provable chaotic post-treatment on defective generators tends to improve
their statistical profile. 

Such post-treatment depending on the properties of the inputted function $f$, we need to recall a general scheme to produce
functions and an iteration number $b$ such that $\Gamma_{\{b\}}$ is strongly connected.

\section{Functions with Strongly Connected $\Gamma_{\{b\}}(f)$}\label{sec:SCCfunc}
First of all, let $f: \Bool^{{\mathsf{N}}} \rightarrow \Bool^{{\mathsf{N}}}$.
It has been shown~\cite[Theorem 4]{bcgr11:ip} that 
if its iteration graph $\Gamma(f)$ is strongly connected, then 
the output of $\chi_{\textit{14Secrypt}}$ follows 
a law that tends to the uniform distribution 
if and only if its Markov matrix is a doubly stochastic one.
In~\cite[Section 4]{DBLP:conf/secrypt/CouchotHGWB14},
we have presented a general scheme which generates 
function with strongly connected iteration graph $\Gamma(f)$ and
with doubly stochastic Markov probability matrix.

Basically, let us consider the ${\mathsf{N}}$-cube. Let us next 
remove one Hamiltonian cycle in this one. When an edge $(x,y)$ 
is removed, an edge $(x,x)$ is added. 

\begin{xpl}
For instance, the iteration graph $\Gamma(f^*)$ 
(given in Figure~\ref{fig:iteration:f*}) 
is the $3$-cube in which the Hamiltonian cycle 
$000,100,101,001,011,111,$ $110,010,000$ 
has been removed.
\end{xpl}

We  have first proven the following result, which 
states that the ${\mathsf{N}}$-cube without one
Hamiltonian cycle 
has the awaited property with regard to the connectivity.

\begin{thrm}
The iteration graph $\Gamma(f)$ issued from
the ${\mathsf{N}}$-cube where an Hamiltonian 
cycle is removed, is strongly connected.
\end{thrm}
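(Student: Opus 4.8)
The plan is to exhibit a directed cycle in $\Gamma(f)$ that visits every vertex, since any digraph containing a spanning directed cycle is automatically strongly connected. To this end I would first fix notation for the removed cycle: write the Hamiltonian cycle as $h_0, h_1, \dots, h_{m-1}, h_m = h_0$ with $m = 2^{\mathsf{N}}$, where consecutive vertices $h_k$ and $h_{k+1}$ differ in exactly one coordinate, say coordinate $i_k \in \llbracket 1, \mathsf{N} \rrbracket$. Starting from the negation function, whose iteration graph is the full (bidirected) $\mathsf{N}$-cube, the construction removes, for each $k$, the arc $h_k \to h_{k+1}$ and replaces it by the self-loop $h_k \to h_k$ (equivalently, it sets $f_{i_k}(h_k) = (h_k)_{i_k}$ and keeps all other components flipping). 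Thus exactly one oriented copy of each cycle edge is deleted, and every other arc of the bidirected cube, self-loops aside, is retained.

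The crux of the argument is to check that the opposite orientation of each cycle edge survives the deletion. The reverse arc $h_{k+1} \to h_k$ corresponds to flipping coordinate $i_k$ at $h_{k+1}$, whereas the only arc deleted at $h_{k+1}$ is the forward cycle arc $h_{k+1} \to h_{k+2}$, which flips coordinate $i_{k+1}$. Formally, $h_{k+1} \to h_k$ would be deleted only if it coincided with some forward cycle arc $h_j \to h_{j+1}$; matching tails forces $j = k+1$ and then matching heads forces $h_{k+2} = h_k$, which is impossible in a Hamiltonian cycle on $m = 2^{\mathsf{N}} \geq 4$ distinct vertices. Hence every reverse arc $h_{k+1} \to h_k$, including the wrap-around arc $h_0 \to h_{m-1}$, belongs to $\Gamma(f)$.

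I would then conclude by reading these surviving reverse arcs in order: they form the directed cycle $h_0 \to h_{m-1} \to h_{m-2} \to \cdots \to h_1 \to h_0$, which passes through all $2^{\mathsf{N}}$ vertices of $\Gamma(f)$. Given any two vertices, one reaches the second from the first by travelling along this cycle, so $\Gamma(f)$ is strongly connected. The only genuinely delicate step is the middle one, namely verifying that deleting one orientation of every cycle edge leaves the reverse orientation intact; everything else is bookkeeping. A secondary point worth a sentence is the degenerate regime: the argument needs $\mathsf{N} \geq 2$ so that the cube actually admits a Hamiltonian cycle and has at least four vertices, ruling out the spurious coincidence $h_{k+2} = h_k$.
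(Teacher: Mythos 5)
Your proof is correct, but one caveat about the comparison: this article never proves the theorem --- it is recalled without proof from the earlier work \cite{DBLP:conf/secrypt/CouchotHGWB14} (``main theorems are recalled to make the article self-sufficient''), so there is no in-paper argument to set yours against, and your reverse-cycle argument has to stand on its own. On the merits it does. The one genuinely delicate point is the one you flag: the construction removes only the arcs oriented along the Hamiltonian cycle, each traded for a self-loop at its tail, and keeps the opposite orientation of every cycle edge. That reading is indeed what the paper intends, as its running example confirms: for $f^*$ with removed cycle $000,100,101,001,011,111,110,010,000$, one computes $f^*_1(100)=0\neq 1$, so the arc $100\to 000$ is present in $\Gamma(f^*)$ even though $000\to 100$ has been deleted. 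Granting that, your two steps close the argument: a reverse arc $h_{k+1}\to h_k$ could be deleted only if it coincided with some forward arc $h_j\to h_{j+1}$, and distinctness of the cycle's vertices forces $j=k+1$ and then $h_{k+2}=h_k$, impossible since $m=2^{\mathsf{N}}\geq 4$; hence all reverse arcs survive, they form a spanning directed cycle, and any digraph containing a spanning directed cycle is strongly connected. Your closing remark that $\mathsf{N}\geq 2$ is needed is harmless, since the hypothesis of the theorem presupposes that a Hamiltonian cycle exists.
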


Moreover, when all the transitions have the same probability ($\frac{1}{n}$),
we have proven the following results:
\begin{thrm}
The Markov Matrix $M$ resulting from the ${\mathsf{N}}$-cube in
which an Hamiltonian 
cycle is removed, is doubly stochastic.
\end{thrm}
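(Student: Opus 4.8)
The plan is to prove the stronger statement that $M$ is \emph{symmetric}, and then to invoke the elementary fact that any symmetric row-stochastic matrix is automatically doubly stochastic (its column sums coincide with the corresponding row sums of its transpose, which equals itself). First I would record that $M$ is row-stochastic by construction: from any vertex $x$, the walk selects a label $i\in\llbracket 1,\mathsf{N}\rrbracket$ uniformly and follows the arc $(x,F_f(x,i))$, so each of the $\mathsf{N}$ outgoing arcs carries probability $\frac{1}{\mathsf{N}}$ and the $x$-th row of $M$ sums to $1$. Writing the entries as $M_{x,y}=\frac{1}{\mathsf{N}}\,\lvert\{\, i\in\llbracket 1,\mathsf{N}\rrbracket : F_f(x,i)=y \,\}\rvert$, it then suffices to establish $M_{x,y}=M_{y,x}$ for all $x,y\in\Bool^{\mathsf{N}}$.

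The core observation is that the removal respects the undirected structure of the $\mathsf{N}$-cube. Before any edge is removed, the iteration graph is that of the negation function $f_0$, in which, for each bit position $i$, the arc $(x,y)$ with $y$ obtained by flipping the $i$-th bit of $x$ is always accompanied by its reverse $(y,x)$ obtained by flipping the same bit of $y$; hence the uncorrected matrix is symmetric. Removing a Hamiltonian-cycle edge $\{x,y\}$, where $x$ and $y$ differ exactly in the $i$-th bit, amounts to setting $F_f(x,i)=x$ and $F_f(y,i)=y$, i.e.\ to replacing the symmetric arc pair $(x,y),(y,x)$ by the two self-loops $(x,x),(y,y)$. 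Thus every removal deletes a symmetric pair of off-diagonal entries and increments two diagonal entries, so symmetry of $M$ is preserved step by step, and I would conclude that $M$ is symmetric, hence doubly stochastic.

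As an independent confirmation, I would also give the direct count, which yields the same conclusion and makes the balance explicit. Since a Hamiltonian cycle is a single cycle through all $2^{\mathsf{N}}$ vertices, every vertex has degree exactly $2$ in it, so exactly two of the $\mathsf{N}$ edges incident to a given vertex $y$ are removed. This gives $M_{y,y}=\frac{2}{\mathsf{N}}$, while the $\mathsf{N}-2$ surviving neighbours each contribute $\frac{1}{\mathsf{N}}$ to the $y$-th column, so that column sums to $\frac{2}{\mathsf{N}}+\frac{\mathsf{N}-2}{\mathsf{N}}=1$.

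The one point demanding care, and the only place the argument could break, is the claim that deleting an \emph{undirected} cycle edge converts \emph{both} directed arcs into self-loops rather than just one orientation: this is exactly what keeps $M$ symmetric and what makes the two added self-loops at each vertex compensate precisely for its two deleted incoming arcs. Pinning down this symmetric behaviour of the construction, together with the degree-$2$ property of a Hamiltonian cycle, is the main obstacle; everything after it is routine bookkeeping. I would note, finally, that nothing in this argument uses the cycle being \emph{balanced}, so the conclusion holds for an arbitrary Hamiltonian cycle.
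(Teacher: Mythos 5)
Your argument proves the statement for the wrong construction, and the claim on which everything rests is false for the object the paper actually builds. You assert that deleting a Hamiltonian-cycle edge $\{x,y\}$, with $x$ and $y$ differing in bit $i$, sets \emph{both} $F_f(x,i)=x$ and $F_f(y,i)=y$, so that the arc pair $(x,y),(y,x)$ becomes two self-loops and $M$ stays symmetric. In the paper the cycle is removed as a \emph{directed} cycle: each vertex loses exactly one outgoing arc (the one toward its cycle successor, replaced by a single self-loop at that vertex) and exactly one incoming arc (the one from its cycle predecessor), while the two reverse arcs survive. This is visible in three places: the paper's description says ``when an edge $(x,y)$ is removed, an edge $(x,x)$ is added'' --- one self-loop, at $x$ only; the running example, where $\Gamma(f^*)$ is the $3$-cube minus the cycle $000,100,101,001,011,111,110,010,000$, has $F_{f^*}(000,2)=010$ even though $\{010,000\}$ is a cycle edge, and its displayed matrix $P$ satisfies $P(000,100)=0$ while $P(100,000)=1/6$, so it is \emph{not} symmetric; and the mixing-time section formalizes the removal via a map $h:\Bool^{\mathsf{N}}\to\llbracket 1,\mathsf{N}\rrbracket$ assigning a single forbidden direction per vertex, $E_h=E\setminus\{(X,Y)\mid X\oplus Y=0^{\mathsf{N}-h(X)}10^{h(X)-1}\}$. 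Consequently your quantitative claims are also wrong for this object: the diagonal entry is $1/\mathsf{N}$, not $2/\mathsf{N}$, and each column contains $\mathsf{N}-1$, not $\mathsf{N}-2$, off-diagonal entries. Both your symmetry argument and your ``independent confirmation'' inherit this error, and you explicitly flagged this very point as the crux --- but resolved it the wrong way.

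The theorem itself remains true, and the repair is close to your direct count, with the correct bookkeeping. In row $y$: the self-loop contributes $1/\mathsf{N}$ and the $\mathsf{N}-1$ arcs toward all neighbours other than the cycle successor of $y$ contribute $1/\mathsf{N}$ each, giving row sum $1$. In column $y$: the same self-loop contributes $1/\mathsf{N}$, and the arcs coming from all neighbours of $y$ other than its cycle predecessor (whose arc into $y$ was removed) contribute $(\mathsf{N}-1)/\mathsf{N}$; note the predecessor and successor are distinct since the cycle has length $2^{\mathsf{N}}\geq 4$. Hence every row and every column sums to $1$: $M$ is doubly stochastic precisely because each vertex loses one outgoing arc, loses one incoming arc, and gains one self-loop that compensates both --- but it is not symmetric, so symmetry cannot be the route to the result. (For reference, the paper states this theorem without an in-text proof, recalling it from prior work; the counting argument above is the one its construction supports.)
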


Let us consider now a ${\mathsf{N}}$-cube where an Hamiltonian 
cycle is removed.
Let $f$ be the corresponding function.
The question which remains to be solved is:
\emph{can we always find $b$ such that $\Gamma_{\{b\}}(f)$ is strongly connected?}

The answer is indeed positive. Furthermore, we have the following results which are stronger
than previous ones. 
\begin{thrm}
There exists $b \in \Nats$ such that $\Gamma_{\{b\}}(f)$ is complete.
\end{thrm}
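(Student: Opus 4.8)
The plan is to recognize that $\Gamma_{\{b\}}(f)$ being complete is exactly the assertion that between every ordered pair of vertices $(x,y)$ of $\Bool^{\mathsf{N}}$ there is a walk of length precisely $b$ in $\Gamma(f)$. Writing $M$ for the Markov matrix of $\Gamma(f)$ (or, equivalently for our purposes, its adjacency matrix, since only the support of the powers matters), completeness of $\Gamma_{\{b\}}(f)$ is equivalent to $(M^b)_{x,y}>0$ for all $x,y$, i.e. to $M^b$ being entrywise positive. So the theorem reduces to exhibiting a power of $M$ with no zero entry, which is precisely the statement that $M$ is \emph{primitive}.

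First I would invoke the earlier theorem stating that $\Gamma(f)$ is strongly connected: this gives irreducibility of $M$, hence a walk of \emph{some} finite length between every ordered pair. The second ingredient, which does the real work, is the feature built into the construction itself: when the outgoing Hamiltonian arc $(x,\mathrm{succ}(x))$ is deleted at a vertex $x$, the self-loop $(x,x)$ is added in its place. Since the removed Hamiltonian cycle visits every vertex exactly once, \emph{every} vertex of $\Gamma(f)$ carries a self-loop, so $M_{x,x}>0$ for all $x$. A $1$-cycle at every vertex forces the period of the irreducible chain to be $1$, hence $M$ is aperiodic; and an irreducible, aperiodic nonnegative matrix is primitive. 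This yields a $b$ with $M^b>0$ and therefore completeness.

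If one prefers an elementary argument avoiding Perron–Frobenius, the same two facts deliver it directly. Let $D$ be the diameter of $\Gamma(f)$, finite by strong connectivity. For any ordered pair $(x,y)$ fix a shortest walk from $x$ to $y$, of length $d(x,y)\le D$; prepend $D-d(x,y)$ traversals of the self-loop at $x$ to obtain a walk of length exactly $D$ (this covers $x=y$ as well, by padding the empty walk). Thus for $b=D$ there is a length-$b$ walk between every ordered pair, so $\Gamma_{\{b\}}(f)$ is complete.

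I expect the only genuine subtlety to be the self-loop bookkeeping: one must confirm that the construction endows \emph{every} vertex with a loop, not merely some of them. This is exactly where the property that a Hamiltonian cycle passes through each vertex once is used, and it is this universality of the self-loops that supplies the aperiodicity (equivalently, the padding) on which the whole argument rests. No tight estimate of $b$ is needed here, only its finiteness; sharpening $b$ toward the mixing time is the task deferred to the next section.
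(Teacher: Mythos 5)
Your proof is correct, and both of its ingredients are genuinely available in the paper: strong connectivity of $\Gamma(f)$ is an earlier theorem there, and the construction does add a self-loop $(x,x)$ at every vertex, since each vertex loses exactly one outgoing arc of the removed Hamiltonian cycle. Your route differs from the paper's in how the key fact is discharged. The paper makes the same initial reduction — an arc $(x,y)$ exists in $\Gamma_{\{b\}}(f)$ iff $M^b_{xy}>0$ — but then simply cites a lemma from prior work (\cite{bcgr11:ip}, Lemma~3) asserting that $M$ is regular, and stops. You instead prove regularity from scratch: irreducibility from strong connectivity, aperiodicity from the universal self-loops, hence primitivity; and your second, Perron--Frobenius-free argument (pad a shortest walk with self-loops at the start vertex) is even more elementary and yields the explicit value $b=D$, the diameter of $\Gamma(f)$. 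What the paper's version buys is brevity by outsourcing; what yours buys is a self-contained proof that makes visible exactly which feature of the construction (a loop at \emph{every} vertex, guaranteed because a Hamiltonian cycle visits every vertex) is responsible for the result, plus a concrete witness for $b$. The only caution is that your first variant implicitly uses the standard fact that all states of an irreducible chain share the same period; your second variant avoids even that.
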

\begin{proof}
There is an arc $(x,y)$ in the 
graph $\Gamma_{\{b\}}(f)$ if and only if $M^b_{xy}$ is positive
where $M$ is the Markov matrix of $\Gamma(f)$.
It has been shown in~\cite[Lemma 3]{bcgr11:ip}  that $M$ is regular.
Thus, there exists $b$ such that there is an arc between any $x$ and $y$.
\end{proof}

This section ends with the idea of removing a Hamiltonian cycle in the 
$\mathsf{N}$-cube. 
In such a context, the Hamiltonian cycle is equivalent to a Gray code.
Many approaches have been proposed as a way to build such codes, for instance 
the Reflected Binary Code. In this one and 
for a $\mathsf{N}$-length cycle, one of the bits is exactly switched 
$2^{\mathsf{N}-1}$ times whereas the other bits are modified at most 
$\left\lfloor \dfrac{2^{\mathsf{N-1}}}{\mathsf{N}-1} \right\rfloor$ times.
It is clear that the function that is built from such a code would
not provide a uniform output.  

The next section presents how to build balanced Hamiltonian cycles in the 
$\mathsf{N}$-cube with the objective to embed them into the 
pseudorandom number generator.

\section{Balanced Hamiltonian Cycle}\label{sec:hamilton}
Many approaches have been developed to solve the problem of building
a Gray code in a $\mathsf{N}$-cube~\cite{Robinson:1981:CS,DBLP:journals/combinatorics/BhatS96,ZanSup04,Bykov2016}, according to properties 
the produced code has to verify.
For instance,~\cite{DBLP:journals/combinatorics/BhatS96,ZanSup04} focus on
balanced Gray codes. In the transition sequence of these codes, 
the number of transitions of each element must differ
at most by 2.
This uniformity is a global property on the cycle, \textit{i.e.},
a property that is established while traversing the whole cycle.
On the other hand, when the objective is to follow a subpart 
of the Gray code and to switch each element approximately the 
same amount of times,
local properties are wished.
For instance, the locally balanced property is studied in~\cite{Bykov2016} 
and an algorithm that establishes locally balanced Gray codes is given.
 
The current context is to provide a function 
$f:\Bool^{\mathsf{N}} \rightarrow \Bool^{\mathsf{N}}$ by removing an Hamiltonian 
cycle in the $\mathsf{N}$-cube. Such a function is going to be iterated
$b$ times to produce a pseudorandom number,
\textit{i.e.}, a vertex in the 
$\mathsf{N}$-cube.
Obviously, the number of iterations $b$ has to be sufficiently large 
to provide a uniform output distribution.
To reduce the number of iterations, it can be claimed
that the provided Gray code
should ideally possess both balanced and locally balanced properties.
However, both algorithms are incompatible with the second one:
balanced Gray codes that are generated by state of the art works~\cite{ZanSup04,DBLP:journals/combinatorics/BhatS96} are not locally balanced. Conversely,
locally balanced Gray codes yielded by Igor Bykov approach~\cite{Bykov2016}
are not globally balanced.
This section thus shows how the non deterministic approach 
presented in~\cite{ZanSup04} has been automatized to provide balanced 
Hamiltonian paths such that, for each subpart, 
the number of switches of each element is as uniform as possible.

\subsection{Analysis of the Robinson-Cohn extension algorithm}
As far as we know three works, 
namely~\cite{Robinson:1981:CS},~\cite{DBLP:journals/combinatorics/BhatS96}, 
and~\cite{ZanSup04} have addressed the problem of providing an approach
to produce balanced gray code.
The authors of~\cite{Robinson:1981:CS} introduced an inductive approach
aiming at producing balanced Gray codes, assuming the user gives 
a special subsequence of the transition sequence at each induction step.
This work has been strengthened in~\cite{DBLP:journals/combinatorics/BhatS96}
where the authors have explicitly shown how to build such a subsequence.
Finally the authors of~\cite{ZanSup04} have presented 
the \emph{Robinson-Cohn extension} 
algorithm. Their rigorous presentation of this algorithm
has mainly allowed them to prove two properties.
The former states that if 
$\mathsf{N}$ is a 2-power, a balanced Gray code is always totally balanced.
The latter states that for every $\mathsf{N}$ there 
exists a Gray code such that all transition count numbers 
are 2-powers whose exponents are either equal
or differ from each other by 1.
However, the authors do not prove that the approach allows to build 
(totally balanced) Gray codes. 
What follows shows that this fact is established and first recalls the approach.

Let be given a $\mathsf{N}-2$-bit Gray code whose transition sequence is
$S_{\mathsf{N}-2}$. What follows is the 
 \emph{Robinson-Cohn extension} method~\cite{ZanSup04}
which produces a $\mathsf{N}$-bits Gray code.

\begin{enumerate}
\item \label{item:nondet}Let $l$ be an even positive integer. Find 
$u_1, u_2, \dots , u_{l-2}, v$ (maybe empty) subsequences of $S_{\mathsf{N}-2}$ 
such that $S_{\mathsf{N}-2}$ is the concatenation of 
$$
s_{i_1}, u_0, s_{i_2}, u_1, s_{i_3}, u_2, \dots , s_{i_l-1}, u_{l-2}, s_{i_l}, v
$$
where $i_1 = 1$, $i_2 = 2$, and $u_0 = \emptyset$ (the empty sequence).
\item\label{item:u'}Replace in $S_{\mathsf{N}-2}$ the sequences $u_0, u_1, u_2, \ldots, u_{l-2}$ 
  by 
  $\mathsf{N} - 1,  u'(u_1,\mathsf{N} - 1, \mathsf{N}) , u'(u_2,\mathsf{N}, \mathsf{N} - 1), u'(u_3,\mathsf{N} - 1,\mathsf{N}), \dots, u'(u_{l-2},\mathsf{N}, \mathsf{N} - 1)$
  respectively, where $u'(u,x,y)$ is the sequence $u,x,u^R,y,u$ such that 
  $u^R$ is $u$ in reversed order. 
  The obtained sequence is further denoted as $U$.
\item\label{item:VW} Construct the sequences $V=v^R,\mathsf{N},v$, $W=\mathsf{N}-1,S_{\mathsf{N}-2},\mathsf{N}$, and let $W'$ be $W$ where the first 
two elements have been exchanged.
\item The transition sequence $S_{\mathsf{N}}$ is thus the concatenation $U^R, V, W'$.
\end{enumerate} 

It has been proven in~\cite{ZanSup04} that 
$S_{\mathsf{N}}$ is the transition sequence of a cyclic $\mathsf{N}$-bits Gray code 
if $S_{\mathsf{N}-2}$ is. 
However, step~(\ref{item:nondet}) is not a constructive 
step that precises how to select the subsequences which ensure that 
yielded Gray code is balanced.
Following sections show how to choose the sequence $l$ to have the balance property.

\subsection{Balanced Codes}
Let us first recall how to formalize the balance property of a Gray code.
Let  $L = w_1, w_2, \dots, w_{2^\mathsf{N}}$ be the sequence 
of a $\mathsf{N}$-bits cyclic Gray code. 
The  transition sequence 
$S = s_1, s_2, \dots, s_{2^n}$, $s_i$, $1 \le i \le 2^\mathsf{N}$,
indicates which bit position changes between 
codewords at index $i$ and $i+1$ modulo $2^\mathsf{N}$. 
The \emph{transition count} function  
$\textit{TC}_{\mathsf{N}} : \{1,\dots, \mathsf{N}\} \rightarrow \{0, \ldots, 2^{\mathsf{N}}\}$ 
gives the number of times $i$ occurs in $S$,
\textit{i.e.}, the number of times 
the bit $i$ has been switched in $L$.   

The Gray code is \emph{totally balanced} if $\textit{TC}_{\mathsf{N}}$ 
is constant (and equal to $\frac{2^{\mathsf{N}}}{\mathsf{N}}$).
It is \emph{balanced} if for any two bit indices $i$ and $j$, 
$|\textit{TC}_{\mathsf{N}}(i) - \textit{TC}_{\mathsf{N}}(j)| \le  2$.

\begin{xpl}
Let $L^*=000,100,101,001,011,111,$ $110,010$ be the Gray code that corresponds to 
the Hamiltonian cycle that has been removed in $f^*$.
Its transition sequence is $S=3,1,3,2,3,1,3,2$ and its transition count function is 
$\textit{TC}_3(1)= \textit{TC}_3(2)=2$ and  $\textit{TC}_3(3)=4$. Such a Gray code is balanced. 

Let  $L^4$ $=0000,0010,0110,1110,1111,0111,0011,0001,0101,0100,1100,1101,1001,1011,1010,1000$
be a cyclic Gray code. Since $S=2,3,4,1,4,$ $3,2,3,1,4,1,3,2,1,2,4$, $\textit{TC}_4$ is equal to 4 everywhere, this code
is thus totally balanced.

On the contrary, for the standard $4$-bits Gray code  
$L^{\textit{st}}=0000,0001,0011, 
0010,0110,0111,0101,0100,$ \newline $1100,1101,1111,1110,1010,1011,1001,1000$,
we have $\textit{TC}_4(1)=8$ $\textit{TC}_4(2)=4$ $\textit{TC}_4(3)=\textit{TC}_4(4)=2$ and
the code is neither balanced nor totally balanced.
\end{xpl}

\begin{thrm}\label{prop:balanced}
Let $\mathsf{N}$ in $\Nats^*$, and $a_{\mathsf{N}}$ be defined by
$a_{\mathsf{N}}= 2 \left\lfloor \dfrac{2^{\mathsf{N}}}{2\mathsf{N}} \right\rfloor$. 
There exists then a sequence $l$ in 
step~(\ref{item:nondet}) of the \emph{Robinson-Cohn extension} algorithm
such that all the transition counts $\textit{TC}_{\mathsf{N}}(i)$ 
are $a_{\mathsf{N}}$ or $a_{\mathsf{N}}+2$ 
for any $i$, $1 \le i \le \mathsf{N}$.
\end{thrm}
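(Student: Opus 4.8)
The plan is to argue by induction on $\mathsf{N}$ in steps of two, which is natural since the \emph{Robinson--Cohn extension} builds an $\mathsf{N}$-bit code out of an $(\mathsf{N}-2)$-bit one. For the base cases $\mathsf{N}\in\{1,2\}$ I would exhibit the codes directly (e.g.\ the transition sequences $1,1$ and $1,2,1,2$), whose counts equal $a_1=a_2=2$; a few further small values of $\mathsf{N}$ may have to be checked by hand, for the reason that appears below. The induction hypothesis is that there is a transition sequence $S_{\mathsf{N}-2}$ of a balanced $(\mathsf{N}-2)$-bit Gray code with every $\textit{TC}_{\mathsf{N}-2}(i)\in\{a_{\mathsf{N}-2},a_{\mathsf{N}-2}+2\}$. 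The only freedom offered by step~(\ref{item:nondet}) is the choice of the separators $s_{i_1},\dots,s_{i_l}$ (equivalently, the even integer $l$ and which occurrences of each value become separators), and the whole argument consists in tuning this choice.

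The heart of the proof is a bookkeeping lemma describing how each transition count is transformed by steps~(\ref{item:u'})--(\ref{item:VW}). The key remark is that in $S_{\mathsf{N}}=U^R,V,W'$ every occurrence of a value $i\le\mathsf{N}-2$ in $S_{\mathsf{N}-2}$ is reproduced with a \emph{fixed multiplicity} depending only on its role: a separator occurrence survives once in $U$ and once in $W'$, hence with multiplicity $2$; a non-separator occurrence lies either in some chunk $u_j$ (tripled by $u'(u,x,y)=u,x,u^R,y,u$ in $U$, then appearing once more in $W'$) or in the tail $v$ (kept once in $U$, doubled in $V=v^R,\mathsf{N},v$, once more in $W'$), and in both cases with multiplicity $4$. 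Writing $c_i=\textit{TC}_{\mathsf{N}-2}(i)$ and letting $\alpha_i$ be the number of separators equal to $i$, and counting separately the $\mathsf{N}-1$'s and $\mathsf{N}$'s introduced by the construction, I obtain
\[
\textit{TC}_{\mathsf{N}}(i) = 4\,c_i - 2\,\alpha_i \quad (1 \le i \le \mathsf{N}-2),
\qquad
\textit{TC}_{\mathsf{N}}(\mathsf{N}-1) = \textit{TC}_{\mathsf{N}}(\mathsf{N}) = l .
\]
As a consistency check, summing over all bits returns $2^{\mathsf{N}}$, matching the length of $S_{\mathsf{N}}$.

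Next I would solve the target constraint. Since $a_{\mathsf{N}}$ is even and $\frac{a_{\mathsf{N}}}{2}=\lfloor 2^{\mathsf{N}-1}/\mathsf{N}\rfloor$ is an integer, requiring $\textit{TC}_{\mathsf{N}}(i)\in\{a_{\mathsf{N}},a_{\mathsf{N}}+2\}$ amounts to $\alpha_i=2c_i-\frac{a_{\mathsf{N}}}{2}-\varepsilon_i$ with $\varepsilon_i\in\{0,1\}$, while taking $l=a_{\mathsf{N}}$ already places the two new bits at $a_{\mathsf{N}}$. The identity $\sum_i\alpha_i=l$ then forces $\sum_i\varepsilon_i=r$, where $r=2^{\mathsf{N}-1}\bmod \mathsf{N}$ is exactly the number of bits that must carry the value $a_{\mathsf{N}}+2$ (indeed $\mathsf{N}a_{\mathsf{N}}+2r=2^{\mathsf{N}}$). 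As there are only $\mathsf{N}-2$ free bits, this is solvable whenever $r\le \mathsf{N}-2$; in the one remaining residue $r=\mathsf{N}-1$ I would instead take $l=a_{\mathsf{N}}+2$, which puts the two new bits at $a_{\mathsf{N}}+2$ and leaves $\sum_i\varepsilon_i=r-2$, again achievable. It then remains to realise a prescribed multiset $\{\alpha_i\}$ by selecting, among the $c_i$ occurrences of each value $i$, exactly $\alpha_i$ of them as separators (taking care that positions $1$ and $2$ are selected, as imposed by $i_1=1$, $i_2=2$).

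The hard part will be this admissibility step, namely proving that the prescribed values satisfy $0\le\alpha_i\le c_i$ for a suitable distribution of the $\varepsilon_i$. Upper-bounding $\alpha_i\le c_i$ rewrites as $\varepsilon_i\ge c_i-\frac{a_{\mathsf{N}}}{2}$, which is harmless once $\max_i c_i\le \frac{a_{\mathsf{N}}}{2}+1$, and this last inequality, comparing $a_{\mathsf{N}-2}+2$ with $\lfloor 2^{\mathsf{N}-1}/\mathsf{N}\rfloor+1$, is the genuine pinch-point: it is comfortable for large $\mathsf{N}$ (since $\frac{a_{\mathsf{N}}}{2}\approx(2-4/\mathsf{N})\,a_{\mathsf{N}-2}$) but only barely holds for the smallest values, where it forces certain $\varepsilon_i=1$ and couples the construction to the residues $2^{\mathsf{N}-1}\bmod\mathsf{N}$. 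The careful part of the write-up is therefore a precise estimate of the two floor functions defining $a_{\mathsf{N}}$ and $a_{\mathsf{N}-2}$, ensuring both $a_{\mathsf{N}-2}+2\le\frac{a_{\mathsf{N}}}{2}+1$ and $a_{\mathsf{N}-2}\ge\frac{a_{\mathsf{N}}}{4}$ (so that $\alpha_i\ge 0$), complemented by a direct verification for the few small $\mathsf{N}$ where the bound is tight; the two admissible choices of $l$ supply exactly the slack needed for the residue bookkeeping to close in every case.
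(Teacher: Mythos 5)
Your proposal is correct and follows essentially the same route as the paper's own proof: induction via the Robinson--Cohn extension, the same occurrence-count system (your $\alpha_i$ is the paper's separator count $ti_{\mathsf{N}}$, and your relation $\textit{TC}_{\mathsf{N}}(i)=4c_i-2\alpha_i$ is exactly the paper's system~(\ref{eq:sys:zt2}) rearranged, with the tail $v$ taken empty there), the same target assignment with $c_{\mathsf{N}}$ bits at $a_{\mathsf{N}}$ and $d_{\mathsf{N}}=2^{\mathsf{N}-1}\bmod \mathsf{N}$ bits at $a_{\mathsf{N}}+2$, and the same endgame (a growth estimate comparing $a_{\mathsf{N}}$ with $a_{\mathsf{N}-2}$ for large $\mathsf{N}$, plus hand-checking of the small cases, which the paper does for $3\le \mathsf{N}\le 7$). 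If anything, your explicit handling of the forced equality $\textit{TC}_{\mathsf{N}}(\mathsf{N}-1)=\textit{TC}_{\mathsf{N}}(\mathsf{N})=l$ and of the two admissible choices $l\in\{a_{\mathsf{N}},a_{\mathsf{N}}+2\}$ is more careful than the paper's, whose fixed assignment~(\ref{eq:TCN:def}) would need a permutation of the bit indices when $d_{\mathsf{N}}=1$.
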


The proof is done by induction on $\mathsf{N}$. Let us immediately verify 
that it is established for both odd and even smallest values, \textit{i.e.}, 
$3$ and $4$.
For the initial case where $\mathsf{N}=3$, \textit{i.e.}, $\mathsf{N-2}=1$ we successively have:  $S_1=1,1$, $l=2$,  $u_0 = \emptyset$, and $v=\emptyset$.
Thus again the algorithm successively produces 
$U= 1,2,1$, $V = 3$, $W= 2, 1, 1,3$, and $W' = 1,2,1,3$.
Finally, $S_3$ is $1,2,1,3,1,2,1,3$ which obviously verifies the theorem.    
 For the initial case where $\mathsf{N}=4$, \textit{i.e.}, $\mathsf{N-2}=2$ 
we successively have:  $S_1=1,2,1,2$, $l=4$, 
$u_0,u_1,u_2 = \emptyset,\emptyset,\emptyset$, and $v=\emptyset$.
Thus again the algorithm successively produces 
$U= 1,3,2,3,4,1,4,3,2$, $V = 4$, $W= 3, 1, 2, 1,2, 4$, and $W' = 1, 3, 2, 1,2, 4 $.
Finally, $S_4$ is 
$
2,3,4,1,4,3,2,3,1,4,1,3,2,1,2,4
$ 
such that $\textit{TC}_4(i) = 4$ and the theorem is established for 
odd and even initial values.

For the inductive case, let us first define some variables.
Let $c_{\mathsf{N}}$ (resp. $d_{\mathsf{N}}$) be the number of elements 
whose transition count is exactly $a_{\mathsf{N}}$ (resp $a_{\mathsf{N}} +2$).
Both of these variables are defined by the system 
 
\[
\left\{
\begin{array}{lcl}
c_{\mathsf{N}} + d_{\mathsf{N}} & = & \mathsf{N} \\
c_{\mathsf{N}}a_{\mathsf{N}} + d_{\mathsf{N}}(a_{\mathsf{N}}+2) & = & 2^{\mathsf{N}} 
\end{array}
\right.
\Leftrightarrow 
\left\{
\begin{array}{lcl}
d_{\mathsf{N}} & = & \dfrac{2^{\mathsf{N}} -\mathsf{N}.a_{\mathsf{N}}}{2} \\
c_{\mathsf{N}} &= &\mathsf{N} -  d_{\mathsf{N}}
\end{array}
\right.
\]

Since $a_{\mathsf{N}}$ is even, $d_{\mathsf{N}}$ is an integer.
Let us first prove that both $c_{\mathsf{N}}$ and  $d_{\mathsf{N}}$ are positive
integers.
Let $q_{\mathsf{N}}$ and $r_{\mathsf{N}}$, respectively, be  
the quotient and the remainder in the Euclidean division
of $2^{\mathsf{N}}$ by $2\mathsf{N}$, \textit{i.e.}, 
$2^{\mathsf{N}} = q_{\mathsf{N}}.2\mathsf{N} + r_{\mathsf{N}}$, with $0 \le r_{\mathsf{N}} <2\mathsf{N}$.
First of all, the integer $r$ is even since $r_{\mathsf{N}} = 2^{\mathsf{N}} - q_{\mathsf{N}}.2\mathsf{N}= 2(2^{\mathsf{N}-1} - q_{\mathsf{N}}.\mathsf{N})$. 
Next,  $a_{\mathsf{N}}$ is $\frac{2^{\mathsf{N}}-r_{\mathsf{N}}}{\mathsf{N}}$. Consequently 
$d_{\mathsf{N}}$ is $r_{\mathsf{N}}/2$ and is thus a positive integer s.t. 
$0 \le d_{\mathsf{N}} <\mathsf{N}$.
The proof for $c_{\mathsf{N}}$ is obvious.

For any $i$, $1 \le i \le \mathsf{N}$, let $zi_{\mathsf{N}}$ (resp. $ti_{\mathsf{N}}$ and $bi_{\mathsf{N}}$) 
be the occurrence number of element $i$ in the sequence $u_0, \dots, u_{l-2}$ 
(resp. in the sequences $s_{i_1}, \dots , s_{i_l}$ and $v$)
in step (\ref{item:nondet}) of the algorithm.

Due to the definition of $u'$ in step~(\ref{item:u'}), 
$3.zi_{\mathsf{N}} + ti_{\mathsf{N}}$ is the
 number of element $i$ in the sequence $U$.
It is clear that the number of element $i$ in the sequence $V$ is 
$2bi_{\mathsf{N}}$ due to step (\ref{item:VW}). 
We thus have the following system:
$$
\left\{
\begin{array}{lcl}
3.zi_{\mathsf{N}} + ti_{\mathsf{N}} + 2.bi_{\mathsf{N}} + \textit{TC}_{\mathsf{N}-2}(i) &= &\textit{TC}_{\mathsf{N}}(i) \\
zi_{\mathsf{N}} + ti_{\mathsf{N}}  + bi_{\mathsf{N}} & =& \textit{TC}_{\mathsf{N}-2}(i)
\end{array}
\right.
\qquad 
\Leftrightarrow 
$$

\begin{equation}
\left\{
\begin{array}{lcl}
zi_{\mathsf{N}} &= & 
\dfrac{\textit{TC}_{\mathsf{N}}(i) - 2.\textit{TC}_{\mathsf{N}-2}(i) - bi_{\mathsf{N}}}{2}\\
ti_{\mathsf{N}} &= & \textit{TC}_{\mathsf{N}-2}(i)-zi_{\mathsf{N}}-bi_{\mathsf{N}}
\end{array}
\right.
\label{eq:sys:zt1}
\end{equation}

In this set of 2 equations with 3 unknown variables, let $b_i$ be set with 0.
In this case, since $\textit{TC}_{\mathsf{N}}$ is even (equal to $a_{\mathsf{N}}$ 
or to $a_{\mathsf{N}}+2$), the variable  $zi_{\mathsf{N}}$ is thus an integer.
Let us now prove that the resulting system has always positive integer 
solutions $z_i$, $t_i$, $0 \le z_i, t_i \le \textit{TC}_{\mathsf{N}-2}(i)$ 
and s.t. their sum is equal to $\textit{TC}_{\mathsf{N}-2}(i)$. 
This latter constraint is obviously established if the system has a solution.
We thus have the following system.

\begin{equation}
\left\{
\begin{array}{lcl}
zi_{\mathsf{N}} &= & 
\dfrac{\textit{TC}_{\mathsf{N}}(i) - 2.\textit{TC}_{\mathsf{N}-2}(i) }{2}\\
ti_{\mathsf{N}} &= & \textit{TC}_{\mathsf{N}-2}(i)-zi_{\mathsf{N}}
\end{array}
\right.
\label{eq:sys:zt2}
\end{equation}

The definition of $\textit{TC}_{\mathsf{N}}(i)$ depends on the value of $\mathsf{N}$.
When $3 \le N \le 7$, values are defined as follows:
\begin{eqnarray*}
\textit{TC}_{3} & = & [2,2,4] \\
\textit{TC}_{5} & = & [6,6,8,6,6] \\
\textit{TC}_{7} & = & [18,18,20,18,18,18,18] \\
\\
\textit{TC}_{4} & = & [4,4,4,4] \\
\textit{TC}_{6} & = & [10,10,10,10,12,12] \\
\end{eqnarray*}
It is not difficult to check that all these instanciations verify the aforementioned constraints. 

When  $N  \ge 8$, $\textit{TC}_{\mathsf{N}}(i)$ is defined as follows:
\begin{equation}
\textit{TC}_{\mathsf{N}}(i) = \left\{
\begin{array}{l} 
a_{\mathsf{N}} \textrm{ if } 1 \le i \le c_{\mathsf{N}} \\
a_{\mathsf{N}}+2 \textrm{ if } c_{\mathsf{N}} +1 \le i \le c_{\mathsf{N}} + d_{\mathsf{N}} 
\end{array}
\right.
\label{eq:TCN:def}
\end{equation}

We thus  have
\[
\begin{array}{rcl} 
\textit{TC}_{\mathsf{N}}(i) - 2.\textit{TC}_{\mathsf{N}-2}(i) 
&\ge& 
a_{\mathsf{N}} - 2(a_{\mathsf{N}-2}+2) \\
&\ge& 
\frac{2^{\mathsf{N}}-r_{\mathsf{N}}}{\mathsf{N}}
-2 \left( \frac{2^{\mathsf{N-2}}-r_{\mathsf{N-2}}}{\mathsf{N-2}}+2\right)\\
&\ge& 
\frac{2^{\mathsf{N}}-2N}{\mathsf{N}}
-2 \left( \frac{2^{\mathsf{N-2}}}{\mathsf{N-2}}+2\right)\\
&\ge& 
\frac{(\mathsf{N} -2).2^{\mathsf{N}}-2N.2^{\mathsf{N-2}}-6N(N-2)}{\mathsf{N.(N-2)}}\\
\end{array}
\]

A simple variation study of the function $t:\R \rightarrow \R$ such that 
$x \mapsto t(x) = (x -2).2^{x}-2x.2^{x-2}-6x(x-2)$ shows that 
its derivative is strictly positive if $x \ge 6$ and $t(8)=224$.
The integer $\textit{TC}_{\mathsf{N}}(i) - 2.\textit{TC}_{\mathsf{N}-2}(i)$ is thus positive 
for any $\mathsf{N} \ge 8$ and the proof is established.


For each element $i$, we are then left to choose $zi_{\mathsf{N}}$ positions 
among $\textit{TC}_{\mathsf{N}}(i)$, which leads to 
${\textit{TC}_{\mathsf{N}}(i) \choose zi_{\mathsf{N}} }$ possibilities.
Notice that all such choices lead to an Hamiltonian path.


\section{Mixing Time}\label{sec:hypercube}
This section considers functions $f: \Bool^{\mathsf{N}} \rightarrow \Bool^{\mathsf{N}} $ 
issued from an hypercube where an Hamiltonian path has been removed
as described in the previous section.
Notice that the iteration graph is always a subgraph of 
${\mathsf{N}}$-cube augmented with all the self-loop, \textit{i.e.}, all the 
edges $(v,v)$ for any $v \in \Bool^{\mathsf{N}}$. 
Next, if we add probabilities on the transition graph, iterations can be 
interpreted as Markov chains.

\begin{xpl}
Let us consider for instance  
the graph $\Gamma(f)$ defined 
in Figure~\ref{fig:iteration:f*} and 
the probability function $p$ defined on the set of edges as follows:
$$
p(e) \left\{
\begin{array}{ll}
= \frac{2}{3} \textrm{ if $e=(v,v)$ with $v \in \Bool^3$,}\\
= \frac{1}{6} \textrm{ otherwise.}
\end{array}
\right.  
$$
The matrix $P$ of the Markov chain associated to the function $f^*$ and to its probability function $p$ is 
\[
P=\dfrac{1}{6} \left(
\begin{array}{llllllll}
4&1&1&0&0&0&0&0 \\
1&4&0&0&0&1&0&0 \\
0&0&4&1&0&0&1&0 \\
0&1&1&4&0&0&0&0 \\
1&0&0&0&4&0&1&0 \\
0&0&0&0&1&4&0&1 \\
0&0&0&0&1&0&4&1 \\
0&0&0&1&0&1&0&4 
\end{array}
\right).
\]
\end{xpl}

A specific random walk in this modified hypercube is first 
introduced (see Section~\ref{sub:stop:formal}). We further 
 study this random walk in a theoretical way to 
provide an upper bound of fair sequences 
(see Section~\ref{sub:stop:bound}).
We finally complete this study with experimental
results that reduce this bound (Sec.~\ref{sub:stop:exp}).
For a general reference on Markov chains,
see~\cite{LevinPeresWilmer2006}, 
and particularly Chapter~5 on stopping times.

\subsection{Formalizing the Random Walk}\label{sub:stop:formal}

First of all, let $\pi$, $\mu$ be two distributions on $\Bool^{\mathsf{N}}$. The total
variation distance between $\pi$ and $\mu$ is denoted $\tv{\pi-\mu}$ and is
defined by
$$\tv{\pi-\mu}=\max_{A\subset \Bool^{\mathsf{N}}} |\pi(A)-\mu(A)|.$$ It is known that
$$\tv{\pi-\mu}=\frac{1}{2}\sum_{X\in\Bool^{\mathsf{N}}}|\pi(X)-\mu(X)|.$$ Moreover, if
$\nu$ is a distribution on $\Bool^{\mathsf{N}}$, one has
$$\tv{\pi-\mu}\leq \tv{\pi-\nu}+\tv{\nu-\mu}$$

Let $P$ be the matrix of a Markov chain on $\Bool^{\mathsf{N}}$. For any
$X\in \Bool^{\mathsf{N}}$, let $P(X,\cdot)$ be the distribution induced by the
${\rm bin}(X)$-th row of $P$, where ${\rm bin}(X)$ is the integer whose
binary encoding is $X$. If the Markov chain induced by $P$ has a stationary
distribution $\pi$, then we define
$$d(t)=\max_{X\in\Bool^{\mathsf{N}}}\tv{P^t(X,\cdot)-\pi}.$$

and

$$t_{\rm mix}(\varepsilon)=\min\{t \mid d(t)\leq \varepsilon\}.$$


Intuitively speaking,  $t_{\rm mix}(\varepsilon)$ is the time/steps required
to be sure to be $\varepsilon$-close to the stationary distribution, wherever
the chain starts.

One can prove that

$$t_{\rm mix}(\varepsilon)\leq \lceil\log_2(\varepsilon^{-1})\rceil t_{\rm mix}(\frac{1}{4})$$



Let $(X_t)_{t\in \mathbb{N}}$ be a sequence of $\Bool^{\mathsf{N}}$ valued random
variables. A $\mathbb{N}$-valued random variable $\tau$ is a {\it stopping
  time} for the sequence $(X_i)$ if for each $t$ there exists $B_t\subseteq
(\Bool^{\mathsf{N}})^{t+1}$ such that $\{\tau=t\}=\{(X_0,X_1,\ldots,X_t)\in B_t\}$. 
In other words, the event $\{\tau = t \}$ only depends on the values of 
$(X_0,X_1,\ldots,X_t)$, not on $X_k$ with $k > t$.

Let $(X_t)_{t\in \mathbb{N}}$ be a Markov chain and $f(X_{t-1},Z_t)$ a
random mapping representation of the Markov chain. A {\it randomized
  stopping time} for the Markov chain is a stopping time for
$(Z_t)_{t\in\mathbb{N}}$. If the Markov chain is irreducible and has $\pi$
as stationary distribution, then a {\it stationary time} $\tau$ is a
randomized stopping time (possibly depending on the starting position $X$),
such that  the distribution of $X_\tau$ is $\pi$:
$$\P_X(X_\tau=Y)=\pi(Y).$$

\subsection{Upper bound of Stopping Time}\label{sub:stop:bound}

A stopping time $\tau$ is a {\emph strong stationary time} if $X_{\tau}$ is
independent of $\tau$. The following result will be useful~\cite[Proposition~6.10]{LevinPeresWilmer2006},

\begin{thrm}\label{thm-sst}
If $\tau$ is a strong stationary time, then $d(t)\leq \max_{X\in\Bool^{\mathsf{N}}}
\P_X(\tau > t)$.
\end{thrm}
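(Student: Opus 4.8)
The plan is to route the argument through the pointwise lower bound
$$P^t(X,Y) \ge \P_X(\tau \le t)\,\pi(Y) \quad \text{for every } Y \in \Bool^{\mathsf{N}},$$
which is the natural quantity controlled by a strong stationary time: it says exactly that the \emph{separation} of $P^t(X,\cdot)$ from $\pi$ is at most $\P_X(\tau>t)$. Once this bound is in hand, the total variation estimate follows from a one-line summation. So first I would establish this lower bound, and then convert it into the desired inequality on $d(t)$.

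To get the lower bound, I would decompose $P^t(X,Y)=\P_X(X_t=Y)$ according to the value taken by the stopping time:
$$\P_X(X_t = Y) = \sum_{s=0}^{t} \P_X(\tau = s,\, X_t = Y) + \P_X(\tau > t,\, X_t = Y) \ge \sum_{s=0}^{t}\P_X(\tau = s,\, X_t = Y).$$
The heart of the matter is the identity $\P_X(\tau = s,\, X_t = Y) = \P_X(\tau = s)\,\pi(Y)$ for every $s \le t$. This is where the two defining features of a strong stationary time are used jointly: conditionally on $\{\tau=s\}$ the chain sits at $X_\tau$, whose law is $\pi$ and is independent of $\tau$; applying the (strong) Markov property at time $s$ and using that $\pi$ is stationary, the state $X_t$ for $t\ge s$ is again distributed as $\pi$ and stays independent of the event $\{\tau=s\}$. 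Summing over $s \le t$ then gives $P^t(X,Y) \ge \pi(Y)\sum_{s\le t}\P_X(\tau=s) = \pi(Y)\,\P_X(\tau \le t)$.

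With the lower bound secured, I would conclude as follows. Starting from $\tv{P^t(X,\cdot)-\pi} = \sum_{Y:\,\pi(Y) > P^t(X,Y)} (\pi(Y) - P^t(X,Y))$ and the estimate $\pi(Y) - P^t(X,Y) \le \pi(Y) - \P_X(\tau \le t)\,\pi(Y) = \pi(Y)\,\P_X(\tau > t)$, I obtain
$$\tv{P^t(X,\cdot)-\pi} \le \sum_{Y \in \Bool^{\mathsf{N}}} \pi(Y)\,\P_X(\tau > t) = \P_X(\tau > t).$$
Taking the maximum over the starting state $X$ yields $d(t) \le \max_{X \in \Bool^{\mathsf{N}}} \P_X(\tau > t)$, as claimed.

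I expect the main obstacle to be the rigorous justification of the conditional stationarity identity $\P_X(\tau=s,\, X_t=Y) = \P_X(\tau=s)\,\pi(Y)$. Because $\tau$ here is a \emph{randomized} stopping time (a stopping time for the auxiliary driving sequence $(Z_t)$ of the random mapping representation, not for $(X_t)$ itself), one must check that the strong Markov property still applies at $\tau$ and that the independence of $X_\tau$ from $\tau$ propagates forward to $X_t$ for all $t \ge \tau$. Everything else — the decomposition and the passage from the separation bound to total variation — is routine.
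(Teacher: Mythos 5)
Your proof is correct. The paper itself contains no proof of this statement --- it is quoted as Proposition~6.10 of Levin, Peres and Wilmer --- and your argument is exactly the standard one from that reference: the pointwise separation bound $P^t(X,Y)\ge \P_X(\tau\le t)\,\pi(Y)$, obtained by decomposing on $\{\tau=s\}$ and using stationarity of $\pi$ after time $s$, followed by summing $\pi(Y)-P^t(X,Y)\le \pi(Y)\P_X(\tau>t)$ over the states where $\pi$ exceeds $P^t(X,\cdot)$. You also correctly identify and resolve the only delicate point in this setting: since $\tau$ is a randomized stopping time, $\{\tau=s\}$ is measurable with respect to the driving variables $Z_1,\ldots,Z_s$, while $X_t$ for $t>s$ is built from $X_s$ and the independent fresh variables $Z_{s+1},\ldots,Z_t$, which is precisely what legitimates the Markov-property step in the identity $\P_X(\tau=s,\,X_t=Y)=\P_X(\tau=s)\,\pi(Y)$.
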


Let $E=\{(X,Y)\mid
X\in \Bool^{\mathsf{N}}, Y\in \Bool^{\mathsf{N}},\ X=Y \text{ or } X\oplus Y \in 0^*10^*\}$.
In other words, $E$ is the set of all the edges in the classical 
${\mathsf{N}}$-cube. 
Let $h$ be a function from $\Bool^{\mathsf{N}}$ into $\llbracket 1, {\mathsf{N}} \rrbracket$.
Intuitively speaking $h$ aims at memorizing for each node 
$X \in \Bool^{\mathsf{N}}$ whose edge is removed in the Hamiltonian cycle,
\textit{i.e.}, which bit in $\llbracket 1, {\mathsf{N}} \rrbracket$ 
cannot be switched.

We denote by $E_h$ the set $E\setminus\{(X,Y)\mid X\oplus Y =
0^{{\mathsf{N}}-h(X)}10^{h(X)-1}\}$. This is the set of the modified hypercube, 
\textit{i.e.}, the ${\mathsf{N}}$-cube where the Hamiltonian cycle $h$ 
has been removed.

We define the Markov matrix $P_h$ for each line $X$ and 
each column $Y$  as follows:
\begin{equation}
\left\{
\begin{array}{ll}
P_h(X,X)=\frac{1}{2}+\frac{1}{2{\mathsf{N}}} & \\
P_h(X,Y)=0 & \textrm{if  $(X,Y)\notin E_h$}\\
P_h(X,Y)=\frac{1}{2{\mathsf{N}}} & \textrm{if $X\neq Y$ and $(X,Y) \in E_h$}
\end{array}
\right.
\label{eq:Markov:rairo}
\end{equation} 

We denote by $\ov{h} : \Bool^{\mathsf{N}} \rightarrow \Bool^{\mathsf{N}}$ the function 
such that for any $X \in \Bool^{\mathsf{N}} $, 
$(X,\ov{h}(X)) \in E$ and $X\oplus\ov{h}(X)=0^{{\mathsf{N}}-h(X)}10^{h(X)-1}$. 
The function $\ov{h}$ is said to be {\it square-free} if for every $X\in \Bool^{\mathsf{N}}$,
$\ov{h}(\ov{h}(X))\neq X$. 

\begin{lmm}\label{lm:h}
If $\ov{h}$ is bijective and square-free, then $h(\ov{h}^{-1}(X))\neq h(X)$.
\end{lmm}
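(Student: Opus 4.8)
The plan is to argue by contradiction, using the fact that $\ov{h}(X)$ is nothing but $X$ with the single bit in position $h(X)$ flipped. First I would fix $X \in \Bool^{\mathsf{N}}$ and set $Y = \ov{h}^{-1}(X)$; this is legitimate precisely because $\ov{h}$ is assumed bijective, so the hypothesis of bijectivity is used only to make $\ov{h}^{-1}(X)$ well defined. The claim to establish then becomes $h(Y) \neq h(X)$.

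Next I would unwind the definition of $\ov{h}$ into a statement about individual bits. Since $\ov{h}(Y) = X$, the defining relation $Y \oplus \ov{h}(Y) = 0^{{\mathsf{N}}-h(Y)}10^{h(Y)-1}$ says that $X$ and $Y$ differ in exactly the bit indexed by $h(Y)$; equivalently, $Y$ is $X$ with bit $h(Y)$ flipped. Now suppose, toward a contradiction, that $h(Y) = h(X)$, and write $j$ for this common index. From $h(X) = j$, the vertex $\ov{h}(X)$ is obtained by flipping bit $j$ of $X$, which is exactly $Y$ by the previous sentence. Hence $\ov{h}(X) = Y$, and therefore $\ov{h}(\ov{h}(Y)) = \ov{h}(X) = Y$. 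This contradicts the square-free hypothesis, which asserts $\ov{h}(\ov{h}(Y)) \neq Y$. Consequently $h(Y) \neq h(X)$, that is, $h(\ov{h}^{-1}(X)) \neq h(X)$.

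I do not anticipate a real obstacle: the entire content reduces to the observation that flipping one and the same bit twice returns to the start, so ``$h$ taking the same value at the two endpoints of a removed edge'' is precisely the length-two cycle that square-freeness rules out. The only points requiring care are keeping the bit-flip bookkeeping consistent between the applications of $\ov{h}$ at $X$ and at $Y$, and remembering that bijectivity is invoked purely to license the use of $\ov{h}^{-1}$.
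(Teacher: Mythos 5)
Your proof is correct and follows essentially the same route as the paper's: assume $h(\ov{h}^{-1}(X))=h(X)$, use the defining bit-flip relation of $\ov{h}$ at both $X$ and $\ov{h}^{-1}(X)$ to conclude $\ov{h}(X)=\ov{h}^{-1}(X)$, and derive a length-two cycle contradicting square-freeness. The only cosmetic difference is that you exhibit the contradiction as $\ov{h}(\ov{h}(\ov{h}^{-1}(X)))=\ov{h}^{-1}(X)$ while the paper writes it as $\ov{h}(\ov{h}(X))=X$; these are the same observation applied at the two endpoints of the edge.
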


\begin{proof}
Let $\ov{h}$ be bijective.
Let $k\in \llbracket 1, {\mathsf{N}} \rrbracket$ s.t. $h(\ov{h}^{-1}(X))=k$.
Then $(\ov{h}^{-1}(X),X)$ belongs to $E$ and 
$\ov{h}^{-1}(X)\oplus X = 0^{{\mathsf{N}}-k}10^{k-1}$.
Let us suppose $h(X) = h(\ov{h}^{-1}(X))$. In such a case, $h(X) =k$.
By definition of $\ov{h}$, $(X, \ov{h}(X)) \in E $ and 
$X\oplus\ov{h}(X)=0^{{\mathsf{N}}-h(X)}10^{h(X)-1} = 0^{{\mathsf{N}}-k}10^{k-1}$.
Thus $\ov{h}(X)= \ov{h}^{-1}(X)$, which leads to $\ov{h}(\ov{h}(X))= X$.
This contradicts the square-freeness of $\ov{h}$.
\end{proof}

Let $Z$ be a random variable that is uniformly distributed over
$\llbracket 1, {\mathsf{N}} \rrbracket \times \Bool$.
For $X\in \Bool^{\mathsf{N}}$, we
define, with $Z=(i,b)$,  
$$
\left\{
\begin{array}{ll}
f(X,Z)=X\oplus (0^{{\mathsf{N}}-i}10^{i-1}) & \text{if } b=1 \text{ and } i\neq h(X),\\
f(X,Z)=X& \text{otherwise.} 
\end{array}\right.
$$

The Markov chain is thus defined as 
$$
X_t= f(X_{t-1},Z_t)
$$


An integer $\ell\in \llbracket 1,{\mathsf{N}} \rrbracket$ is said {\it fair} 
at time $t$ if there
exists $0\leq j <t$ such that $Z_{j+1}=(\ell,\cdot)$ and $h(X_j)\neq \ell$.
In other words, there exists a date $j$ before $t$ where 
the first element of the random variable $Z$ is exactly $l$ 
(\textit{i.e.}, $l$ is the strategy at date $j$) 
and where the configuration $X_j$ allows to cross the edge $l$.  
 
Let $\ts$ be the first time all the elements of $\llbracket 1, {\mathsf{N}} \rrbracket$
are fair. The integer $\ts$ is a randomized stopping time for
the Markov chain $(X_t)$.

\begin{lmm}
The integer $\ts$ is a strong stationary time.
\end{lmm}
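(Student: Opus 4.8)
The plan is to show that $\ts$ is a strong stationary time by proving that at the moment all coordinates have become fair, the configuration $X_{\ts}$ is uniformly distributed over $\Bool^{\mathsf{N}}$, independently of the value of $\ts$. The intuition is that the random walk defined through $f(X,Z)$ flips bit $i$ with probability $\frac{1}{2\mathsf{N}}$ whenever the drawn pair $Z=(i,1)$ is allowed, \textit{i.e.}, whenever $i \neq h(X)$. The key observation is that fairness of a coordinate $\ell$ records the first date at which coordinate $\ell$ was both selected and permitted to flip; after that date, coordinate $\ell$ can continue to be updated, and the idea is that the \emph{last} update to each coordinate, taken together, randomizes every bit uniformly.

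First I would make precise the mechanism by which a single coordinate gets randomized. For a fixed coordinate $\ell$, consider the steps at which $Z_{j+1}=(\ell,b)$ is drawn: the first coordinate of $Z$ equals $\ell$. Conditioned on such a step being a \emph{fair} step (so $h(X_j)\neq \ell$), the bit $b\in\Bool$ is uniform and independent, so $X$'s $\ell$-th coordinate is set to its previous value XOR $b$; since $b$ is an independent fair coin, this makes the $\ell$-th bit of the configuration uniform regardless of its prior value. The subtle point, and the one I expect to be the main obstacle, is the role of Lemma~\ref{lm:h} and the square-freeness hypothesis: one must argue that the forbidden direction $h(X_j)$ does not conspire with the history so as to correlate the coordinates or break the independence of $X_{\ts}$ from $\ts$. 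In particular, when coordinate $\ell$ is selected but forbidden (because $h(X_j)=\ell$), no flip occurs, and I must check that excluding these non-fair selections still leaves, for each coordinate, at least one genuinely randomizing fair step by time $\ts$ — which is exactly what the definition of $\ts$ guarantees.

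The heart of the argument is then to establish the independence of $X_{\ts}$ from $\ts$ and the uniformity of $X_{\ts}$ simultaneously. The standard technique (in the style of the classical strong stationary time argument for the lazy random walk on the hypercube, e.g. the analysis in Chapter~6 of~\cite{LevinPeresWilmer2006}) is to condition on the whole sequence of which coordinates were selected and at which dates they first became fair, and to show that, conditioned on all of this information \emph{and} on the value $\ts=t$, the final configuration $X_t$ is uniform. The crucial step is that the bit $b$ associated with the \emph{last fair flip} of each coordinate is an independent fair coin, and these last-fair-flip coins across the $\mathsf{N}$ coordinates are mutually independent and independent of the stopping time itself. One checks that $h(X_j)\neq \ell$ at the fair date for $\ell$ means the flip actually occurs, and Lemma~\ref{lm:h} ensures the forbidden coordinate changes along the walk in a way consistent with $\ov h$ being a bijection, so no coordinate is ever permanently frozen.

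Thus the plan concludes by assembling these pieces: define the stopping time $\ts$, observe it is a randomized stopping time (already noted in the text), and then argue that conditioning on $\ts=t$ leaves each coordinate independently uniform because the final fair flip of each coordinate supplies an independent uniform bit that is not revealed by the event $\{\ts=t\}$. This yields $\P_X(X_{\ts}=Y\mid \ts=t)=2^{-\mathsf{N}}$ for every $Y$ and every $t$, which is precisely the statement that $X_{\ts}$ is uniform and independent of $\ts$, \textit{i.e.}, that $\ts$ is a strong stationary time. The delicate part throughout is handling the forbidden direction $h(X)$ carefully so that its dependence on the current configuration does not destroy the independence of the randomizing coin from the stopping event; this is where the square-freeness and bijectivity assumptions, via Lemma~\ref{lm:h}, must be invoked.
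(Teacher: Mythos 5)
Your plan hinges on one claim: that, conditioned on the selection history and on the event $\{\ts=t\}$, the coin of the \emph{last} fair flip of each coordinate before $t$ is a fair coin, and that these $\mathsf{N}$ coins are mutually independent and independent of $\{\ts=t\}$. This is precisely the heart of the lemma, and in your proposal it is asserted ("one checks that\ldots", "the standard technique\ldots") rather than proven. The classical last-refresh argument for the lazy walk on the hypercube, which you invoke from~\cite{LevinPeresWilmer2006}, does not transfer here, for a concrete reason: in the classical walk the refresh times are functions of the coordinate-selection sequence alone, so conditioning on them cannot bias the refresh bits. In the present walk, whether a step is fair depends on $h(X_j)$, hence on the trajectory, hence on the earlier coins; and whether a given fair step is the \emph{last} one before $t$, and whether $\ts=t$ at all, depend on the later trajectory, hence on the very coin whose uniformity you need. "Being the last fair flip" is a future-measurable event, not a stopping time, and the natural symmetry argument (flip that coin and compare) does not preserve the conditioning event: changing the coin changes the subsequent configurations, hence which later steps are fair, hence possibly the identity of the last fair step and the value of $\ts$ itself. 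Moreover, your appeal to Lemma~\ref{lm:h} and square-freeness to rescue this point is misplaced: the paper's proof of this lemma uses neither hypothesis; bijectivity and square-freeness of $\ov{h}$ enter only later, in Lemma~\ref{prop:lambda} and Theorem~\ref{prop:stop}, to bound $E[\ts]$, not to establish strong stationarity.

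The paper argues in the opposite time direction, and this is what makes its argument work where yours stalls. It considers $\tau_\ell$, the \emph{first} fair time of $\ell$, which is a randomized stopping time whose occurrence depends only on the past and on the first component of $Z_{\tau_\ell}$, not on the coin $b$; hence, conditioned on the past, $b$ is uniform and the $\ell$-th bit of $X_{\tau_\ell}$ is $0$ or $1$ with probability $\frac{1}{2}$, independently of the other bits. It then propagates this property forward: at every subsequent step the $\ell$-th bit is switched from $0$ to $1$ or from $1$ to $0$ with the same probability, so its law remains uniform and independent of the other bits for all $t\geq\tau_\ell$, in particular at time $\ts$. In other words, the paper replaces your backward-looking conditioning on the last fair flip by a first-flip-plus-preservation scheme that never conditions on future-dependent events. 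To repair your proposal you would have to either prove your conditional-independence claim directly, confronting the future-dependence described above, or switch to the paper's forward argument; as written, the key step is missing.
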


\begin{proof}
Let $\tau_\ell$ be the first time that $\ell$ is fair. The random variable
$Z_{\tau_\ell}$ is of the form $(\ell,b)$ 
such that 
$b=1$ with probability $\frac{1}{2}$ and $b=0$ with probability
$\frac{1}{2}$. Since $h(X_{\tau_\ell-1})\neq\ell$ the value of the $\ell$-th
bit of $X_{\tau_\ell}$ 
is $0$ or $1$ with the same probability ($\frac{1}{2}$).
This probability is independent of the value of the other bits.

Moving next in the chain, at each step,
the $l$-th bit  is switched from $0$ to $1$ or from $1$ to $0$ each time with
the same probability. Therefore,  for $t\geq \tau_\ell$, the
$\ell$-th bit of $X_t$ is $0$ or $1$ with the same probability,  and
independently of the value of the other bits, proving the
lemma.\end{proof}

\begin{thrm} \label{prop:stop}
If $\ov{h}$ is bijective and square-free, then
$E[\ts]\leq 8{\mathsf{N}}^2+ 4{\mathsf{N}}\ln ({\mathsf{N}}+1)$. 
\end{thrm}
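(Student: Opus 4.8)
The plan is to reduce the statement to a coupon-collector-type estimate on the individual coordinates, using the square-freeness hypothesis only to guarantee, through Lemma~\ref{lm:h}, that no coordinate can stay blocked forever. For each $\ell \in \llbracket 1, \mathsf{N}\rrbracket$ let $\tau_\ell$ denote the first time $\ell$ is fair, as in the proof that $\ts$ is a strong stationary time; then $\ts = \max_{1 \le \ell \le \mathsf{N}} \tau_\ell$, since fairness is monotone in time. I would bound, uniformly over the starting configuration, the tail $\P_X(\tau_\ell > t)$, control the maximum over the $\mathsf{N}$ coordinates by the union bound $\P_X(\ts > t) \le \sum_\ell \P_X(\tau_\ell > t)$, and recover the expectation through $E[\ts] = \sum_{t \ge 0}\P(\ts > t)$, splitting the sum at a well-chosen threshold.

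The technical heart is a local, configuration-independent lower bound on the rate at which a fixed $\ell$ becomes fair, and this is exactly where square-freeness enters. Recall that $\ell$ becomes fair the first time the walk selects direction $\ell$ (whatever the value of $b$) at a configuration $X$ with $h(X) \neq \ell$, so the only obstruction is the \emph{blocked set} $S_\ell = \{X \in \Bool^{\mathsf{N}} \mid h(X) = \ell\}$. I would show $S_\ell$ is never a trap: given $X \in S_\ell$, put $Y = \ov{h}^{-1}(X)$, which exists since $\ov{h}$ is bijective. Then $X$ and $Y$ differ exactly in bit $k := h(Y)$, and Lemma~\ref{lm:h} gives $k = h(\ov{h}^{-1}(X)) \neq h(X) = \ell$. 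Hence the edge $(X,Y)$ is not the one removed at $X$, so it survives in $E_h$ and the chain can move from $X$ to $Y$ with probability $\tfrac{1}{2\mathsf{N}}$; at $Y$ we have $h(Y) = k \neq \ell$, so $Y \notin S_\ell$ and one further selection of direction $\ell$ makes $\ell$ fair. Thus from every configuration a bounded maneuver makes $\ell$ fair with probability bounded below, and square-freeness is precisely what prevents a coordinate from being permanently blocked (without it, $E[\ts]$ could be infinite).

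To obtain the two terms of the bound I would separate the steps of the walk into two regimes. Whenever the current configuration lies outside $S_\ell$, the coordinate $\ell$ is collected at rate $\tfrac{1}{\mathsf{N}}$ per step; aggregating these productive steps over all coordinates is a genuine coupon-collector process whose tail integrates to the $4\mathsf{N}\ln(\mathsf{N}+1)$ contribution. The remaining steps are those spent inside some $S_\ell$; the escape estimate above bounds each sojourn in a blocked set by a geometric variable of mean $O(\mathsf{N})$, and summing the total blocked time along the trajectory yields the additive $8\mathsf{N}^2$ contribution. Combining the two regimes in the tail bound for $\ts$ and integrating should give $E[\ts] \le 8\mathsf{N}^2 + 4\mathsf{N}\ln(\mathsf{N}+1)$.

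The main obstacle I anticipate is this final accounting rather than the local geometry. It is easy to turn the escape estimate into a crude per-coordinate success rate of order $\mathsf{N}^{-2}$, but that would inflate the coupon-collector term multiplicatively and yield only an $\mathsf{N}^2\ln\mathsf{N}$ bound. Getting the clean additive split requires \emph{decoupling} the collection rate from the blocked time: one must bound the total time the trajectory spends in $\bigcup_\ell S_\ell$ before $\ts$ independently of the $\tfrac{1}{\mathsf{N}}$ collection rate, so that blocking contributes additively and not multiplicatively. Making this decoupling rigorous, and tracking the constants through the tail integration and the choice of threshold, is where I expect the real work to lie.
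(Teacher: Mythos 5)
Your local analysis coincides exactly with the paper's: the escape maneuver through $\ov{h}^{-1}(X)$, justified by Lemma~\ref{lm:h}, is precisely how the paper proves $E[S_{X,\ell}]\leq 8\mathsf{N}^2$ uniformly in the starting configuration (Lemma~\ref{prop:lambda}, via $\P(S_{X,\ell}\leq 2)\geq \frac{1}{4\mathsf{N}^2}$ and a geometric tail). But your global accounting is not the paper's, and the step you defer --- ``making the decoupling rigorous'' --- is the actual proof, and the framework you chose cannot deliver it. Writing $\ts=\max_\ell \tau_\ell$ and applying the union bound $\P_X(\ts>t)\leq\sum_\ell \P_X(\tau_\ell>t)$ forces you to charge the worst-case (blocked) rate $\Theta(\mathsf{N}^{-2})$ to every coordinate simultaneously; integrating the tail then gives $O(\mathsf{N}^2\ln\mathsf{N})$, as you yourself observe, and no choice of threshold repairs this: the maximum of $\mathsf{N}$ variables, each controlled only by a geometric tail of rate $\mathsf{N}^{-2}$, genuinely has expectation of order $\mathsf{N}^2\ln\mathsf{N}$ unless one exploits the joint structure of the $\tau_\ell$'s, which a union bound throws away.

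The missing idea is to replace the max/union-bound decomposition by a sequential one. Let $\ts^\prime$ be the first time at which all bits but one are fair, and let $\ell$ be the last unfair bit; then $\ts=\ts^\prime+S_{X_{\ts^\prime},\ell}$, hence $E[\ts]=E[\ts^\prime]+E[S_{X_{\ts^\prime},\ell}]$ by linearity, and the $8\mathsf{N}^2$ escape bound is paid exactly once, for the single terminal coordinate, instead of once per coordinate. What makes $E[\ts^\prime]\leq 4\mathsf{N}\ln(\mathsf{N}+1)$ a clean coupon-collector estimate (the paper's Lemma~\ref{lm:stopprime}) is a structural fact that your ``blocked set'' picture obscures: since $h(X)$ is a single index, at most one coordinate is blocked at any configuration, so as long as at least two bits remain unfair there is always an unfair, unblocked bit, and some new bit becomes fair with per-step probability at least roughly $(\mathsf{N}-i)/\mathsf{N}$ when $i-1$ bits are already fair. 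Blocking therefore never degrades the collection rate during this phase --- note also that the trajectory always lies in $\bigcup_\ell S_\ell$, since every $X$ belongs to $S_{h(X)}$, so ``the total time spent in the blocked sets'' is not a meaningful quantity to control --- it can only stall the very last coordinate, and that is exactly where square-freeness and your escape estimate are invoked. With this split the two terms of the claimed bound fall out immediately; without it, your outline stops at an $\mathsf{N}^2\ln\mathsf{N}$ bound.
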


For each $X\in \Bool^{\mathsf{N}}$ and $\ell\in\llbracket 1,{\mathsf{N}}\rrbracket$, 
let $S_{X,\ell}$ be the
random variable that counts the number of steps 
from $X$ until we reach a configuration where
$\ell$ is fair. More formally
\[
\begin{array}{rcl}
S_{X,\ell}&=&\min \{t \geq 1\mid h(X_{t-1})\neq \ell\text{ and }Z_t=(\ell,.) \\
&& \qquad \text{ and } X_0=X\}.
\end{array}
\]


\begin{lmm}\label{prop:lambda}
Let $\ov{h}$ is a square-free bijective function. Then
for all $X$ and 
all $\ell$, 
the inequality 
$E[S_{X,\ell}]\leq 8{\mathsf{N}}^2$ is established.
\end{lmm}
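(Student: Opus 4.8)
The plan is to exploit the fact that, up to the stopping time $S_{X,\ell}$, the $\ell$-th bit never changes, and then to control separately the time spent where $\ell$ \emph{can} be made fair and the time spent where it cannot. Write $e_i$ for the vector $0^{\mathsf{N}-i}10^{i-1}$, and call a configuration $Y$ \emph{good} (for $\ell$) if $h(Y)\neq\ell$ and \emph{bad} if $h(Y)=\ell$. At a good configuration the draw $Z_t=(\ell,\cdot)$, of probability $\frac{1}{\mathsf{N}}$, immediately makes $\ell$ fair; at a bad configuration the same draw leaves the chain in place and achieves nothing. The first thing to record is the freezing fact: the only way to flip bit $\ell$ is to draw $(\ell,1)$ at a configuration with $h(X)\neq\ell$, but such a draw is exactly a success. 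Hence before time $S_{X,\ell}$ the chain stays inside the slice $\{Y\mid Y_\ell=X_\ell\}$, an $(\mathsf{N}-1)$-subcube on which it performs a lazy walk flipping only coordinates different from $\ell$.

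Next I would bound the time spent at good configurations. Since each $Z_t$ is drawn independently of the past, every step taken from a good configuration is, conditionally on the past, a success with probability $\frac{1}{\mathsf{N}}$. Thus the number of good-configuration steps taken before (and including) the successful one is dominated by a geometric variable of parameter $\frac{1}{\mathsf{N}}$, with expectation at most $\mathsf{N}$. It therefore remains only to bound the number of steps spent at bad configurations before $S_{X,\ell}$.

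The heart of the argument — and the step I expect to be the main obstacle — is to show that the walk cannot linger among bad configurations, and this is precisely where both hypotheses on $\ov{h}$ enter. Let $X$ be bad, so $h(X)=\ell$. Since $\ov{h}$ is bijective, $W:=\ov{h}^{-1}(X)$ is well defined, and Lemma~\ref{lm:h} gives $h(W)=h(\ov{h}^{-1}(X))\neq h(X)=\ell$. From $\ov{h}(W)=X$ we get $W=X\oplus e_{h(W)}$, so $W$ differs from $X$ in coordinate $h(W)\neq\ell$: it lies in the same slice and, since $h(W)\neq\ell$, it is \emph{good}. Consequently every bad configuration has a good neighbour inside the slice, reachable in one step by drawing $(h(W),1)$, an event of probability $\frac{1}{2\mathsf{N}}$. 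Hence from any bad configuration the chain becomes good at the very next step with probability at least $\frac{1}{2\mathsf{N}}$, so any maximal run of consecutive bad-configuration steps is dominated by a geometric variable of parameter $\frac{1}{2\mathsf{N}}$, of expected length at most $2\mathsf{N}$.

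Finally I would assemble the two estimates by an excursion argument. Cutting the trajectory before $S_{X,\ell}$ into successive good-configuration steps, each possibly followed by one maximal bad excursion, the number of blocks equals the number of good-configuration visits, whose expectation is at most $\mathsf{N}$, while each block has conditional expected length at most $1+2\mathsf{N}$; adding at most one initial bad excursion of expected length $\le 2\mathsf{N}$, Wald's identity yields $E[S_{X,\ell}]\le 2\mathsf{N}+\mathsf{N}(1+2\mathsf{N})=2\mathsf{N}^2+3\mathsf{N}$, comfortably below the announced $8\mathsf{N}^2$. The only delicate point is making this last step rigorous — one checks that the number of good visits is a stopping time for the block decomposition and that each bad excursion has conditionally bounded mean $\le 2\mathsf{N}$ — but the genuine content lies in the structural consequence of Lemma~\ref{lm:h}, namely that every bad state possesses the good neighbour $\ov{h}^{-1}(X)$; without bijectivity and square-freeness no polynomial control of the bad sojourn would be available.
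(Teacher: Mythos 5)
Your proof is correct, and it rests on exactly the same structural insight as the paper's: when $h(X)=\ell$, the configuration $\ov{h}^{-1}(X)$ is adjacent to $X$, the move toward it is not forbidden (bijectivity plus Lemma~\ref{lm:h}, which encapsulates square-freeness, give $h(\ov{h}^{-1}(X))\neq h(X)$), and that neighbour is itself good. Where you genuinely diverge is in the assembly. The paper compresses the two elementary estimates into a single uniform two-step bound $\P(S_{X,\ell}\leq 2)\geq \frac{1}{4\mathsf{N}^2}$, valid from \emph{every} starting configuration, deduces $\P(S_{X,\ell}\geq 2i)\leq \left(1-\frac{1}{4\mathsf{N}^2}\right)^i$ by restarting the chain every two steps, and sums the tail to obtain $2+2(4\mathsf{N}^2-1)=8\mathsf{N}^2$; no excursion bookkeeping and no Wald identity are needed. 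You instead split the trajectory into good-configuration steps (a geometric number of them, mean at most $\mathsf{N}$) interleaved with maximal bad excursions (each of conditional mean at most $2\mathsf{N}$) and combine them by optional stopping. Your route buys a sharper constant ($2\mathsf{N}^2+3\mathsf{N}$ instead of $8\mathsf{N}^2$) and isolates more clearly where each hypothesis is used, at the price of the Wald/stopping-time technicalities you yourself flag as the delicate point; the paper's argument is coarser but entirely elementary, which is why it settles for the constant $8$. Two incidental remarks: your success probability $\frac{1}{\mathsf{N}}$ at a good step is right, since fairness only requires $Z_t=(\ell,\cdot)$ (the paper conservatively writes $\frac{1}{2\mathsf{N}}$, which only weakens the bound); and the freezing/slice observation, while true, is never actually used in either argument.
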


\begin{proof}
For every $X$, every $\ell$, one has $\P(S_{X,\ell}\leq 2)\geq
\frac{1}{4{\mathsf{N}}^2}$. 
Let $X_0= X$.
Indeed, 
\begin{itemize}
\item if $h(X)\neq \ell$, then
$\P(S_{X,\ell}=1)=\frac{1}{2{\mathsf{N}}}\geq \frac{1}{4{\mathsf{N}}^2}$. 
\item otherwise, $h(X)=\ell$, then
$\P(S_{X,\ell}=1)=0$.
But in this case, intuitively, it is possible to move
from $X$ to $\ov{h}^{-1}(X)$ (with probability $\frac{1}{2N}$). And in
$\ov{h}^{-1}(X)$ the $l$-th bit can be switched. 
More formally,
since $\ov{h}$ is square-free,
$\ov{h}(X)=\ov{h}(\ov{h}(\ov{h}^{-1}(X)))\neq \ov{h}^{-1}(X)$. It follows
that $(X,\ov{h}^{-1}(X))\in E_h$. We thus have
$P(X_1=\ov{h}^{-1}(X))=\frac{1}{2{\mathsf{N}}}$. Now, by Lemma~\ref{lm:h},
$h(\ov{h}^{-1}(X))\neq h(X)$. Therefore $\P(S_{x,\ell}=2\mid
X_1=\ov{h}^{-1}(X))=\frac{1}{2{\mathsf{N}}}$, proving that $\P(S_{x,\ell}\leq 2)\geq
\frac{1}{4{\mathsf{N}}^2}$.
\end{itemize}

Therefore, $\P(S_{X,\ell}\geq 3)\leq 1-\frac{1}{4{\mathsf{N}}^2}$. By induction, one
has, for every $i$, $\P(S_{X,\ell}\geq 2i)\leq
\left(1-\frac{1}{4{\mathsf{N}}^2}\right)^i$.
 Moreover,
since $S_{X,\ell}$ is positive, it is known~\cite[lemma 2.9]{proba}, that
$$E[S_{X,\ell}]=\sum_{i=1}^{+\infty}\P(S_{X,\ell}\geq i).$$
Since $\P(S_{X,\ell}\geq i)\geq \P(S_{X,\ell}\geq i+1)$, one has
\[
\begin{array}{rcl}
  E[S_{X,\ell}]&=&\sum_{i=1}^{+\infty}\P(S_{X,\ell}\geq i)\\
&\leq& 
\P(S_{X,\ell}\geq 1) +\P(S_{X,\ell}\geq 2)\\
&& \qquad +2 \sum_{i=1}^{+\infty}\P(S_{X,\ell}\geq 2i).
\end{array}
\]
Consequently,
$$E[S_{X,\ell}]\leq 1+1+2
\sum_{i=1}^{+\infty}\left(1-\frac{1}{4{\mathsf{N}}^2}\right)^i=2+2(4{\mathsf{N}}^2-1)=8{\mathsf{N}}^2,$$
which concludes the proof.
\end{proof}

Let $\ts^\prime$ be the time used to get all the bits but one fair.

\begin{lmm}\label{lm:stopprime}
One has $E[\ts^\prime]\leq 4{\mathsf{N}} \ln ({\mathsf{N}}+1).$
\end{lmm}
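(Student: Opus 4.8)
The plan is to treat $\ts^\prime$ as a coupon-collector time for the set of bits that still have to become fair, exploiting the fact that the only obstacle to a freshly chosen bit becoming fair is the single index blocked by $h$ at the current configuration.

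First I would record the per-step selection probabilities. Since $Z_t=(i,b)$ is uniform on $\llbracket 1,{\mathsf{N}}\rrbracket\times\Bool$, every index $\ell$ is picked (irrespective of $b$) with probability $\frac{1}{{\mathsf{N}}}$ at each step. By the definition of fairness, choosing $\ell$ at a date $j$ with $h(X_j)\neq\ell$ makes $\ell$ fair immediately, and the value of $b$ is irrelevant; moreover fairness is monotone, so the number of unfair bits never increases.

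The key structural observation is that at any configuration $X$ the function $h$ designates exactly one forbidden index $h(X)$. Hence, among the bits that are not yet fair, at most one can be blocked at a given step. Consequently, whenever $k\geq 2$ bits remain unfair, at least $k-1$ of them are simultaneously pickable and unblocked, so the probability of turning one more bit fair at the next step is at least $\frac{k-1}{{\mathsf{N}}}$, conditionally on the whole past. I would stress that this lower bound has to be argued step by step, because $h(X_j)$ (and thus which bit is momentarily blocked) drifts as the chain evolves; what rescues the estimate is only that at each individual step $h(X_j)$ is a single value. From here the argument is the standard coupon-collector decomposition into phases: the waiting time to decrease the number of unfair bits from $k$ to $k-1$ is dominated by a geometric variable of success probability $\frac{k-1}{{\mathsf{N}}}$, hence has expectation at most $\frac{{\mathsf{N}}}{k-1}$. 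Summing these phases for $k={\mathsf{N}},{\mathsf{N}}-1,\dots,2$ — and stopping exactly when a single unfair bit is left, which is precisely why $\ts^\prime$ and not $\ts$ is the quantity at hand — yields $E[\ts^\prime]\leq {\mathsf{N}}\sum_{j=1}^{{\mathsf{N}}-1}\frac{1}{j}$, a harmonic sum that I then bound by $4{\mathsf{N}}\ln({\mathsf{N}}+1)$.

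The main obstacle is the blocked-bit bookkeeping: one must be convinced that, in every phase with $k\geq 2$ unfair bits, the unique index forbidden by $h$ cannot stall progress, since at least one other unfair bit is always free to be selected. This is exactly why the present lemma requires no square-free or bijectivity hypothesis on $\ov{h}$, in contrast with Lemma~\ref{prop:lambda}: those assumptions are needed only to handle the very last bit, where the blocked index could be the sole remaining unfair one and a detour through $\ov{h}^{-1}(X)$ becomes necessary. That final, genuinely harder, bit is accounted for separately through $E[S_{X,\ell}]\leq 8{\mathsf{N}}^2$ and is combined with this lemma to obtain Theorem~\ref{prop:stop}.
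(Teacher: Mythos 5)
Your proof is correct and follows essentially the same route as the paper's: a coupon-collector decomposition into phases according to the number of fair bits, a per-phase geometric bound justified by the observation that $h(X)$ blocks at most one unfair bit at any given step, and a harmonic-sum estimate absorbed into $4\mathsf{N}\ln(\mathsf{N}+1)$. If anything, your worst-case success probability $\frac{k-1}{\mathsf{N}}$ (attained when the blocked index is itself unfair) is stated more carefully than the paper's corresponding per-phase probabilities, but the decomposition, the key blocking argument, and the final summation are the same.
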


\begin{proof}
This is a classical  Coupon Collector's like problem. Let $W_i$ be the
random variable counting the number of moves done in the Markov chain while
we had exactly $i-1$ fair bits. One has $\ts^\prime=\sum_{i=1}^{{\mathsf{N}}-1}W_i$.
 But when we are at position $X$ with $i-1$ fair bits, the probability of
 obtaining a new fair bit is either $1-\frac{i-1}{{\mathsf{N}}}$ if $h(X)$ is fair,
 or  $1-\frac{i-2}{{\mathsf{N}}}$ if $h(X)$ is not fair. 

Therefore,
$\P (W_i=k)\leq \left(\frac{i-1}{{\mathsf{N}}}\right)^{k-1} \frac{{\mathsf{N}}-i+2}{{\mathsf{N}}}.$
Consequently, we have $\P(W_i\geq k)\leq \left(\frac{i-1}{{\mathsf{N}}}\right)^{k-1} \frac{{\mathsf{N}}-i+2}{{\mathsf{N}}-i+1}.$
It follows that $E[W_i]=\sum_{k=1}^{+\infty} \P (W_i\geq k)\leq {\mathsf{N}} \frac{{\mathsf{N}}-i+2}{({\mathsf{N}}-i+1)^2}\leq \frac{4{\mathsf{N}}}{{\mathsf{N}}-i+2}$.

It follows that 
$E[W_i]\leq \frac{4{\mathsf{N}}}{{\mathsf{N}}-i+2}$. Therefore
$$E[\ts^\prime]=\sum_{i=1}^{{\mathsf{N}}-1}E[W_i]\leq 
4{\mathsf{N}}\sum_{i=1}^{{\mathsf{N}}-1} \frac{1}{{\mathsf{N}}-i+2}=
4{\mathsf{N}}\sum_{i=3}^{{\mathsf{N}}+1}\frac{1}{i}.$$

But $\sum_{i=1}^{{\mathsf{N}}+1}\frac{1}{i}\leq 1+\ln({\mathsf{N}}+1)$. It follows that
$1+\frac{1}{2}+\sum_{i=3}^{{\mathsf{N}}+1}\frac{1}{i}\leq 1+\ln({\mathsf{N}}+1).$
Consequently,
$E[\ts^\prime]\leq 
4{\mathsf{N}} (-\frac{1}{2}+\ln({\mathsf{N}}+1))\leq 
4{\mathsf{N}}\ln({\mathsf{N}}+1)$.
\end{proof}

One can now prove Theorem~\ref{prop:stop}.

\begin{proof}
Since $\ts^\prime$ is the time used to obtain $\mathsf{N}-1$ fair bits.
Assume that the last unfair bit is $\ell$. One has
$\ts=\ts^\prime+S_{X_\tau,\ell}$, and therefore $E[\ts] =
E[\ts^\prime]+E[S_{X_\tau,\ell}]$. Therefore, Theorem~\ref{prop:stop} is a
direct application of Lemma~\ref{prop:lambda} and~\ref{lm:stopprime}.
\end{proof}

Now using Markov Inequality, one has $\P_X(\tau > t)\leq \frac{E[\tau]}{t}$.
With $t_n=32N^2+16N\ln (N+1)$, one obtains:  $\P_X(\tau > t_n)\leq \frac{1}{4}$. 
Therefore, using the definition of $t_{\rm mix}$ and
Theorem~\ref{thm-sst}, it follows that
$t_{\rm mix}\leq 32N^2+16N\ln (N+1)=O(N^2)$.

Notice that the calculus of the stationary time upper bound is obtained
under the following constraint: for each vertex in the $\mathsf{N}$-cube 
there are one ongoing arc and one outgoing arc that are removed. 
The calculus doesn't consider (balanced) Hamiltonian cycles, which 
are more regular and more binding than this constraint.
Moreover, the bound
is obtained using the coarse Markov Inequality. For the
classical (lazy) random walk the  $\mathsf{N}$-cube, without removing any
Hamiltonian cycle, the mixing time is in $\Theta(N\ln N)$. 
We conjecture that in our context, the mixing time is also in $\Theta(N\ln
N)$.

In this latter context, we claim that the upper bound for the stopping time 
should be reduced. This fact is studied in the next section.

\subsection{Practical Evaluation of Stopping Times}\label{sub:stop:exp}
 
Let be given a function $f: \Bool^{\mathsf{N}} \rightarrow \Bool^{\mathsf{N}}$
and an initial seed $x^0$.
The pseudo code given in Algorithm~\ref{algo:stop} returns the smallest 
number of iterations such that all elements $\ell\in \llbracket 1,{\mathsf{N}} \rrbracket$ are fair. It allows to deduce an approximation of $E[\ts]$
by calling this code many times with many instances of function and many 
seeds.

\begin{algorithm}[ht]
\KwIn{a function $f$, an initial configuration $x^0$ ($\mathsf{N}$ bits)}
\KwOut{a number of iterations $\textit{nbit}$}

$\textit{nbit} \leftarrow 0$\;
$x\leftarrow x^0$\;
$\textit{fair}\leftarrow\emptyset$\;
\While{$\left\vert{\textit{fair}}\right\vert < \mathsf{N} $}
{
        $ s \leftarrow \textit{Random}(\mathsf{N})$ \;
        $\textit{image} \leftarrow f(x) $\;
        \If{$\textit{Random}(1) \neq 0$ and $x[s] \neq \textit{image}[s]$}{
            $\textit{fair} \leftarrow \textit{fair} \cup \{s\}$\;
            $x[s] \leftarrow \textit{image}[s]$\;
          }
        $\textit{nbit} \leftarrow \textit{nbit}+1$\;
}
\Return{$\textit{nbit}$}\;
\caption{Pseudo Code of stopping time computation}
\label{algo:stop}
\end{algorithm}

Practically speaking, for each number $\mathsf{N}$, $ 3 \le \mathsf{N} \le 16$, 
10 functions have been generated according to the method presented in Section~\ref{sec:hamilton}. For each of them, the calculus of the approximation of $E[\ts]$
is executed 10000 times with a random seed. Figure~\ref{fig:stopping:moy}
summarizes these results. A circle represents the 
approximation of $E[\ts]$ for a given $\mathsf{N}$.
The line is the graph of the function $x \mapsto 2x\ln(2x+8)$. 
It can firstly 
be observed that the approximation is largely
smaller than the upper bound given in Theorem~\ref{prop:stop}.
It can be further deduced  that the conjecture of the previous section 
is realistic according to the graph of $x \mapsto 2x\ln(2x+8)$.


\begin{figure}
\centering
\includegraphics[width=0.49\textwidth]{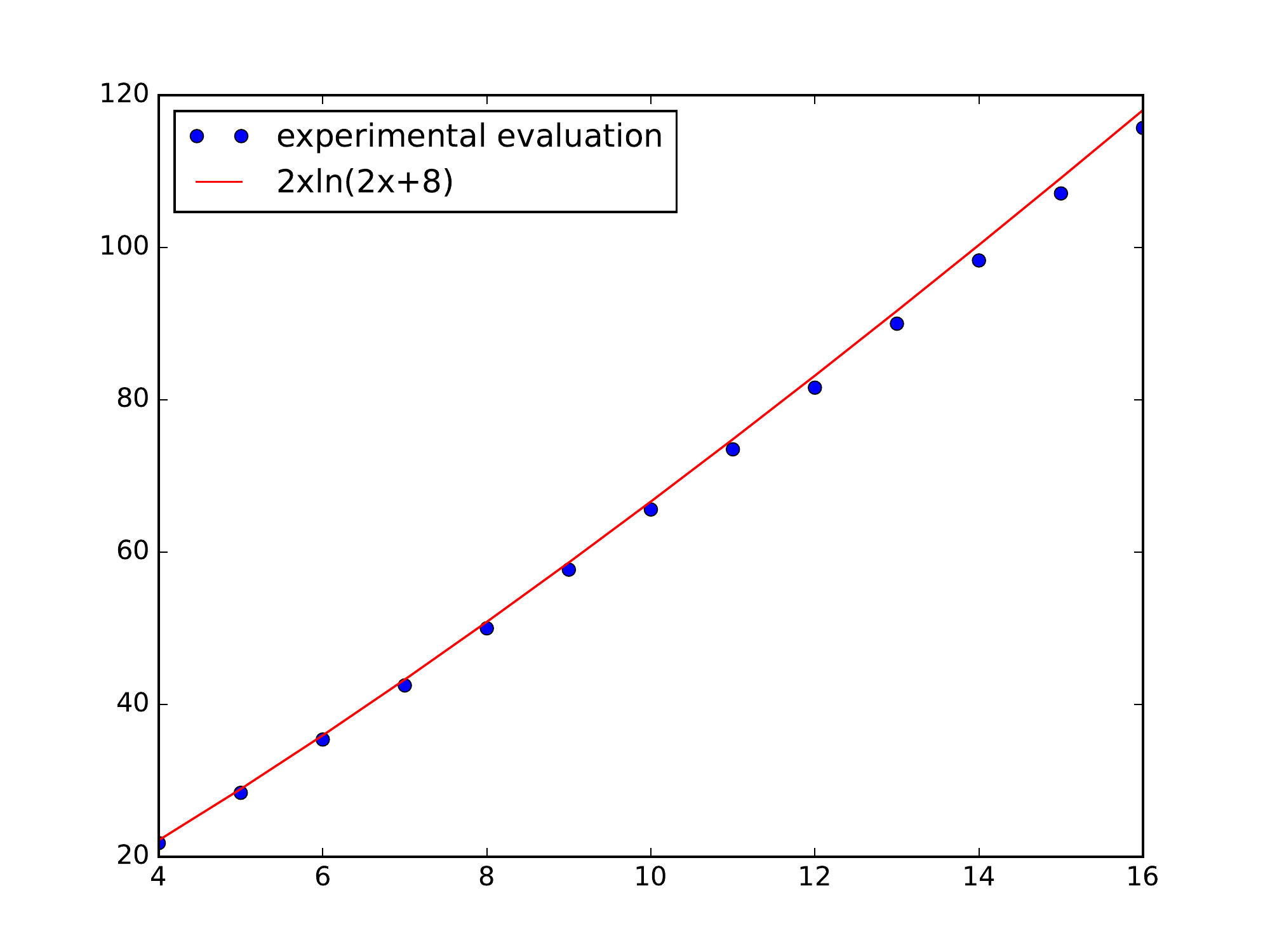}
\caption{Average Stopping Time Approximation}\label{fig:stopping:moy}
\end{figure}

\section{Experiments}\label{sec:prng}
Let us finally present the pseudorandom number generator $\chi_{\textit{16HamG}}$,
which is based on random walks in $\Gamma_{\{b\}}(f)$. 
More precisely, let be given a Boolean map $f:\Bool^{\mathsf{N}} \rightarrow 
\Bool^{\mathsf{N}}$,
a PRNG \textit{Random},
an integer $b$ that corresponds to an iteration number (\textit{i.e.}, the length of the walk), and 
an initial configuration $x^0$. 
Starting from $x^0$, the algorithm repeats $b$ times 
a random choice of which edge to follow, and crosses this edge 
provided it is allowed to do so, \textit{i.e.}, 
when $\textit{Random}(1)$ is not null. 
The final configuration is thus outputted.
This PRNG is formalized in Algorithm~\ref{CI Algorithm:2}.

\begin{algorithm}[ht]
\KwIn{a function $f$, an iteration number $b$, an initial configuration $x^0$ ($\mathsf{N}$ bits)}
\KwOut{a configuration $x$ ($\mathsf{N}$ bits)}
$x\leftarrow x^0$\;
\For{$i=0,\dots,b-1$}
{
\If{$\textit{Random}(1) \neq 0$}{
$s^0\leftarrow{\textit{Random}(\mathsf{N})}$\;
$x\leftarrow{F_f(x,s^0)}$\;
}
}
return $x$\;
\caption{Pseudo Code of the $\chi_{\textit{16HamG}}$ PRNG}
\label{CI Algorithm:2}
\end{algorithm}

This PRNG is slightly different from $\chi_{\textit{14Secrypt}}$
recalled in Algorithm~\ref{CI Algorithm}.
As this latter, the length of the random 
walk of our algorithm is always constant (and is equal to $b$). 
However, in the current version, we add the constraint that   
the probability to execute the function $F_f$ is equal to 0.5 since
the output of $\textit{Random(1)}$ is uniform in $\{0,1\}$.  
This constraint is added to match the theoretical framework of 
Sect.~\ref{sec:hypercube}.

Notice that the chaos property of $G_f$ given in Sect.\ref{sec:proofOfChaos}
only requires the graph $\Gamma_{\{b\}}(f)$ to be  strongly connected.
Since the $\chi_{\textit{16HamG}}$ algorithm 
only adds probability constraints on existing edges, 
it preserves this property.

For each number $\mathsf{N}=4,5,6,7,8$ of bits, we have generated 
the functions according to the method 
given in Sect.~\ref{sec:SCCfunc} and~\ref{sec:hamilton}. 
For each $\mathsf{N}$, we have then restricted this evaluation to the function 
whose Markov Matrix (issued from Eq.~(\ref{eq:Markov:rairo})) 
has the smallest practical mixing time.
Such functions are 
given in Table~\ref{table:nc}.
In this table, let us consider, for instance, 
the function $\textcircled{a}$ from $\Bool^4$ to $\Bool^4$
defined by the following images : 
$[13, 10, 9, 14, 3, 11, 1, 12, 15, 4, 7, 5, 2, 6, 0, 8]$.
In other words,  the image of $3~(0011)$ by $\textcircled{a}$ is $14~(1110)$:
it is obtained as  the  binary  value  of  the  fourth element  in  
the  second  list (namely~14).  

In this table the column that is labeled with $b$ 
gives the practical mixing time 
where the deviation to the standard distribution is inferior than $10^{-6}$.

\begin{table*}[t]
\begin{center}
\begin{scriptsize}
\begin{tabular}{|c|c|c|c|}
\hline
Function $f$ & $f(x)$, for $x$ in $(0,1,2,\hdots,2^n-1)$ & $\mathsf{N}$ & $b$ 
\\ 
\hline
$\textcircled{a}$&[13,10,9,14,3,11,1,12,15,4,7,5,2,6,0,8]&4&64\\
\hline
$\textcircled{b}$& 
[29, 22, 25, 30, 19, 27, 24, 16, 21, 6, 5, 28, 23, 26, 1, 17, & 5 & 78 \\
&
 31, 12, 15, 8, 10, 14, 13, 9, 3, 2, 7, 20, 11, 18, 0, 4]
&&\\
\hline
&
[55, 60, 45, 44, 58, 62, 61, 48, 53, 50, 52, 36, 59, 34, 33, 49,
&&\\
&
 15, 42, 47, 46, 35, 10, 57, 56, 7, 54, 39, 37, 51, 2, 1, 40, 63,
&&\\
$\textcircled{c}$&
 26, 25, 30, 19, 27, 17, 28, 31, 20, 23, 21, 18, 22, 16, 24, 13, 
&6&88\\
&
12, 29, 8, 43, 14, 41, 0, 5, 38, 4, 6, 11, 3, 9, 32]
&&\\
\hline
&
[111, 124, 93, 120, 122, 114, 89, 121, 87, 126, 125, 84, 123, 82, 
&&\\
&112, 80, 79, 106, 105, 110, 75, 107, 73, 108, 119, 100, 117, 116, 
&&\\
&103, 102, 101, 97, 31, 86, 95, 94, 83, 26, 88, 24, 71, 118, 69, 
&&\\
&68, 115, 90, 113, 16, 15, 76, 109, 72, 74, 10, 9, 104, 7, 6, 65, 
&&\\
$\textcircled{d}$ &70, 99, 98, 64, 96, 127, 54, 53, 62, 51, 59, 56, 60, 39, 52, 37, &7 &99\\
&36, 55, 58, 57, 49, 63, 44, 47, 40, 42, 46, 45, 41, 35, 34, 33, 
&&\\
&38, 43, 50, 32, 48, 29, 28, 61, 92, 91, 18, 17, 25, 19, 30, 85, 
&&\\
&22, 27, 2, 81, 0, 13, 78, 77, 14, 3, 11, 8, 12, 23, 4, 21, 20, 
&&\\
&67, 66, 5, 1]
&&\\

\hline
&
[223, 238, 249, 254, 243, 251, 233, 252, 183, 244, 229, 245, 227, 
&&\\
&246, 240, 176, 175, 174, 253, 204, 203, 170, 169, 248, 247, 226, 
&&\\
&228, 164, 163, 162, 161, 192, 215, 220, 205, 216, 155, 222, 221, 
&&\\
&208, 213, 150, 212, 214, 219, 211, 145, 209, 239, 202, 207, 140, 
&&\\
&195, 234, 193, 136, 231, 230, 199, 197, 131, 198, 225, 200, 63, 
&&\\
&188, 173, 184, 186, 250, 57, 168, 191, 178, 180, 52, 187, 242, 
&&\\
&241, 48, 143, 46, 237, 236, 235, 138, 185, 232, 135, 38, 181, 165, 
&&\\
&35, 166, 33, 224, 31, 30, 153, 158, 147, 218, 217, 156, 159, 148, 
&&\\
$\textcircled{e}$&151, 149, 19, 210, 144, 152, 141, 206, 13, 12, 171, 10, 201, 128, 
&8&109\\
&133, 130, 132, 196, 3, 194, 137, 0, 255, 124, 109, 120, 122, 106, 
&&\\
&125, 104, 103, 114, 116, 118, 123, 98, 97, 113, 79, 126, 111, 110, 
&&\\
&99, 74, 121, 72, 71, 70, 117, 101, 115, 102, 65, 112, 127, 90, 89, 
&&\\
&94, 83, 91, 81, 92, 95, 84, 87, 85, 82, 86, 80, 88, 77, 76, 93, 
&&\\
&108, 107, 78, 105, 64, 69, 66, 68, 100, 75, 67, 73, 96, 55, 190, 
&&\\
&189, 62, 51, 59, 41, 60, 119, 182, 37, 53, 179, 54, 177, 32, 45, 
&&\\
&44, 61, 172, 11, 58, 9, 56, 167, 34, 36, 4, 43, 50, 49, 160, 23, 
&&\\
&28, 157, 24, 26, 154, 29, 16, 21, 18, 20, 22, 27, 146, 25, 17, 47, 
&&\\
&142, 15, 14, 139, 42, 1, 40, 39, 134, 7, 5, 2, 6, 129, 8]
&&\\
\hline
\end{tabular}
\end{scriptsize}
\end{center}
\caption{Functions with DSCC Matrix and smallest MT}\label{table:nc}
\end{table*}

Let us first discuss about results against the NIST test suite. 
In our experiments, 100 sequences (s = 100) of 1,000,000 bits are generated and tested.
If the value $\mathbb{P}_T$ of any test is smaller than 0.0001, the sequences are considered to be not good enough
and the generator is unsuitable. 

Table~\ref{The passing rate} shows $\mathbb{P}_T$ of sequences based 
on $\chi_{\textit{16HamG}}$ using different functions, namely
$\textcircled{a}$,\ldots, $\textcircled{e}$.
In this algorithm implementation, 
the embedded PRNG \textit{Random} is the default Python PRNG, \textit{i.e.},
the Mersenne Twister algorithm~\cite{matsumoto1998mersenne}. 
Implementations for $\mathsf{N}=4, \dots, 8$ of this algorithm is evaluated
through the NIST test suite and results are given in columns 
$\textit{MT}_4$, \ldots,  $\textit{MT}_8$.
If there are at least two statistical values in a test, this test is
marked with an asterisk and the average value is computed to characterize the statistics.

We first can see in Table \ref{The passing rate} that all the rates 
are greater than 97/100, \textit{i.e.}, all the generators 
achieve to pass the NIST battery of tests.
It can be noticed that adding chaos properties for Mersenne Twister 
algorithm does not reduce its security against this statistical tests.

\begin{table*} 
\renewcommand{\arraystretch}{1.1}
\begin{center}
\begin{tiny}
\setlength{\tabcolsep}{2pt}

\begin{tabular}{|l|r|r|r|r|r|}
 \hline 
Test & $\textit{MT}_4$ & $\textit{MT}_5$& $\textit{MT}_6$& $\textit{MT}_7$& $\textit{MT}_8$
 \\ \hline 
Frequency (Monobit)& 0.924 (1.0)& 0.678 (0.98)& 0.102 (0.97)& 0.213 (0.98)& 0.719 (0.99) \\ \hline 
Frequency  within a Block& 0.514 (1.0)& 0.419 (0.98)& 0.129 (0.98)& 0.275 (0.99)& 0.455 (0.99)\\ \hline 
Cumulative Sums (Cusum) *& 0.668 (1.0)& 0.568 (0.99)& 0.881 (0.98)& 0.529 (0.98)& 0.657 (0.995)\\ \hline 
Runs& 0.494 (0.99)& 0.595 (0.97)& 0.071 (0.97)& 0.017 (1.0)& 0.834 (1.0)\\ \hline 
Longest Run of Ones in a Block& 0.366 (0.99)& 0.554 (1.0)& 0.042 (0.99)& 0.051 (0.99)& 0.897 (0.97)\\ \hline 
Binary Matrix Rank& 0.275 (0.98)& 0.494 (0.99)& 0.719 (1.0)& 0.334 (0.98)& 0.637 (0.99)\\ \hline 
Discrete Fourier Transform (Spectral)& 0.122 (0.98)& 0.108 (0.99)& 0.108 (1.0)& 0.514 (0.99)& 0.534 (0.98)\\ \hline 
Non-overlapping Template Matching*& 0.483 (0.990)& 0.507 (0.990)& 0.520 (0.988)& 0.494 (0.988)& 0.515 (0.989)\\ \hline 
Overlapping Template Matching& 0.595 (0.99)& 0.759 (1.0)& 0.637 (1.0)& 0.554 (0.99)& 0.236 (1.0)\\ \hline 
Maurer's "Universal Statistical"& 0.202 (0.99)& 0.000 (0.99)& 0.514 (0.98)& 0.883 (0.97)& 0.366 (0.99)\\ \hline 
Approximate Entropy (m=10)& 0.616 (0.99)& 0.145 (0.99)& 0.455 (0.99)& 0.262 (0.97)& 0.494 (1.0)\\ \hline 
Random Excursions *& 0.275 (1.0)& 0.495 (0.975)& 0.465 (0.979)& 0.452 (0.991)& 0.260 (0.989)\\ \hline 
Random Excursions Variant *& 0.382 (0.995)& 0.400 (0.994)& 0.417 (0.984)& 0.456 (0.991)& 0.389 (0.991)\\ \hline 
Serial* (m=10)& 0.629 (0.99)& 0.963 (0.99)& 0.366 (0.995)& 0.537 (0.985)& 0.253 (0.995)\\ \hline 
Linear Complexity& 0.494 (0.99)& 0.514 (0.98)& 0.145 (1.0)& 0.657 (0.98)& 0.145 (0.99)\\ \hline 
\end{tabular}

\begin{tabular}{|l|r|r|r|r|r|}
 \hline 
Test  
&$\textcircled{a}$& $\textcircled{b}$ & $\textcircled{c}$ & $\textcircled{d}$ & $\textcircled{e}$ \\ \hline 
Frequency (Monobit)&0.129 (1.0)& 0.181 (1.0)& 0.637 (0.99)& 0.935 (1.0)& 0.978 (1.0)\\ \hline 
Frequency  within a Block& 0.275 (1.0)& 0.534 (0.98)& 0.066 (1.0)& 0.719 (1.0)& 0.366 (1.0)\\ \hline 
Cumulative Sums (Cusum) *& 0.695 (1.0)& 0.540 (1.0)& 0.514 (0.985)& 0.773 (0.995)& 0.506 (0.99)\\ \hline 
Runs&  0.897 (0.99)& 0.051 (1.0)& 0.102 (0.98)& 0.616 (0.99)& 0.191 (1.0)\\ \hline 
Longest Run of Ones in a Block&  0.851 (1.0)& 0.595 (0.99)& 0.419 (0.98)& 0.616 (0.98)& 0.897 (1.0)\\ \hline 
Binary Matrix Rank& 0.419 (1.0)& 0.946 (0.99)& 0.319 (0.99)& 0.739 (0.97)& 0.366 (1.0)\\ \hline 
Discrete Fourier Transform (Spectral)&  0.867 (1.0)& 0.514 (1.0)& 0.145 (1.0)& 0.224 (0.99)& 0.304 (1.0)\\ \hline 
Non-overlapping Template Matching*& 0.542 (0.990)& 0.512 (0.989)& 0.505 (0.990)& 0.494 (0.989)& 0.493 (0.991)\\ \hline 
Overlapping Template Matching&  0.275 (0.99)& 0.080 (0.99)& 0.574 (0.98)& 0.798 (0.99)& 0.834 (0.99)\\ \hline 
Maurer's "Universal Statistical"&  0.383 (0.99)& 0.991 (0.98)& 0.851 (1.0)& 0.595 (0.98)& 0.514 (1.0)\\ \hline 
Approximate Entropy (m=10)&  0.935 (1.0)& 0.719 (1.0)& 0.883 (1.0)& 0.719 (0.97)& 0.366 (0.99)\\ \hline 
Random Excursions *& 0.396 (0.991)& 0.217 (0.989)& 0.445 (0.975)& 0.743 (0.993)& 0.380 (0.990)\\ \hline 
Random Excursions Variant *& 0.486 (0.997)& 0.373 (0.981)& 0.415 (0.994)& 0.424 (0.991)& 0.380 (0.991)\\ \hline 
Serial* (m=10)&0.350 (1.0)& 0.678 (0.995)& 0.287 (0.995)& 0.740 (0.99)& 0.301 (0.98)\\ \hline 
Linear Complexity& 0.455 (0.99)& 0.867 (1.0)& 0.401 (0.99)& 0.191 (0.97)& 0.699 (1.0)\\ \hline 
\end{tabular}

\end{tiny}
\end{center}
\caption{NIST SP 800-22 test results ($\mathbb{P}_T$)}
\label{The passing rate}
\end{table*}


\section{Conclusion}
This work has assumed a Boolean map $f$ which is embedded into   
a discrete-time dynamical system $G_f$.
This one is supposed to be iterated a fixed number 
$p_1$ or $p_2$,\ldots, or $p_{\mathds{p}}$ 
times before its output is considered. 
This work has first shown that iterations of
$G_f$ are chaotic if and only if its iteration graph $\Gamma_{\mathcal{P}}(f)$
is strongly connected where $\mathcal{P}$ is $\{p_1, \ldots, p_{\mathds{p}}\}$.
It can be deduced that in such a situation a PRNG, which iterates $G_f$,
satisfies the property of chaos and can be used in simulating chaos 
phenomena.

We then have shown that a previously presented approach can be directly 
applied here to generate function $f$ with strongly connected 
$\Gamma_{\mathcal{P}}(f)$. 
The iterated map inside the generator is built by first removing from a 
$\mathsf{N}$-cube a balanced  Hamiltonian cycle and next 
by adding  a self loop to each vertex. 
The PRNG can thus be seen as a random walk of length in $\mathcal{P}$
into  this new $\mathsf{N}$-cube.
We have presented an efficient method to compute such a balanced Hamiltonian 
cycle. This method is an algebraic solution of an undeterministic 
approach~\cite{ZanSup04} and has a low complexity.
To the best of the authors knowledge, this is the first time a full 
automatic method to provide chaotic PRNGs is given.  
Practically speaking, this approach preserves the security properties of 
the embedded PRNG, even if it remains quite cost expensive.

We furthermore have presented an upper bound on the number of iterations 
that is sufficient to obtain an uniform distribution of the output.
Such an upper bound is quadratic on the number of bits to output.
Experiments have however shown that such a bound is in 
$\mathsf{N}.\log(\mathsf{N})$ in practice.
Finally,  experiments through the  NIST battery have shown that
the statistical properties are almost established for
 $\mathsf{N} = 4, 5, 6, 7, 8$ and should be observed for any 
positive integer $\mathsf{N}$.

In future work, we intend to understand the link between 
statistical tests and the properties of chaos for
the associated iterations.
By doing so, relations between desired statistically unbiased behaviors and
topological properties will be understood, leading to better choices
in iteration functions. 
Conditions allowing the reduction of the stopping-time will be
investigated too, while other modifications of the hypercube will
be regarded in order to enlarge the set of known chaotic
and random iterations.


\section*{Acknowledgements}
This work is partially funded by the Labex ACTION program (contract ANR-11-LABX-01-01). 
Computations presented in this article were realised on the supercomputing
facilities provided by the M\'esocentre de calcul de Franche-Comt\'e.

\bibliographystyle{plain} 
\bibliography{biblio}

\end{document}